\documentclass[11pt]{article}
\usepackage{amsmath, amssymb, amsthm}
\usepackage{amsfonts}
\usepackage{upgreek}
\usepackage{bm}
\usepackage[margin=1in]{geometry}
\usepackage{enumerate}
\usepackage[shortlabels]{enumitem}
\usepackage{booktabs}
\usepackage{etoolbox}
\usepackage{mathtools}
\usepackage{graphicx}
\usepackage{titlesec}
\usepackage{braket}
\usepackage{float}
\usepackage{nicefrac}
\usepackage{xcolor,tikz}
\usetikzlibrary{arrows.meta,positioning}
\usepackage[multiple]{footmisc}
\usepackage{float}
\usepackage{bbm}
\usepackage{dsfont}
\usepackage{mathrsfs}
\usepackage{adjustbox}
\usepackage{multirow}
\usepackage{makecell}
\usepackage[math]{cellspace}
\usepackage{hyperref}
\usepackage{xcolor}
\hypersetup{
    colorlinks=true,
    linkcolor = {blue},
    citecolor = {teal},
    urlcolor = {blue}
}

\newcommand{\R}{{\mathbb R}}

\newcommand{\N}{{\mathbb N}}

\newcommand{\E}{{\mathbb E}}

\DeclareMathOperator{\conv}{conv}

\newcommand{\TC}{\mathcal T}

\newcommand{\norm}[1]{\|#1\|}

\newcommand{\indic}{\mathds{1}}

\newcommand{\calO}{{\mathcal{O}}}

\newcommand{\opt}{{\textsc{Opt}}}

\newcommand{\alg}{{\textsc{Alg}}}

\AtBeginEnvironment{theorem}{\setlist{before=\leavevmode, nosep}}

\usepackage{amsmath, amssymb, amsthm}
\usepackage[numbers]{natbib}
\usepackage{cleveref}
\usepackage{algorithm}
\usepackage{algpseudocode}
\usepackage{graphicx}
\usepackage{datetime}
\newtheorem{theorem}{Theorem}
\newtheorem{lemma}[theorem]{Lemma}
\newtheorem{corollary}[theorem]{Corollary}
\newtheorem{proposition}[theorem]{Proposition}

\providecommand{\proofname}{Proof}
\newcommand{\jmlrBlackBox}{\rule{1.5ex}{1.5ex}}
\newcommand{\jmlrQED}{\hfill\jmlrBlackBox\par\bigskip}
\renewenvironment{proof}[1][]{
    \ifx\newenvironment#1\newenvironment
      \par\noindent{\bfseries\upshape \proofname\ }
    \else
      \def\temp{#1}%
      \def\noname{noname}%
      \ifx\temp\noname
      \else
        \par\noindent{\bfseries\upshape #1\ }
      \fi
    \fi
    \ignorespaces
  }{%
    \jmlrQED
  }

\crefname{lemma}{Lemma}{lemmas}
\crefname{proposition}{Proposition}{propositions}
\crefname{remark}{Remark}{remarks}
\crefname{corollary}{Corollary}{corollaries}
\crefname{definition}{Definition}{definitions}

\newdateformat{monthyeardate}{%
  \monthname[\THEMONTH] \THEYEAR}

\begin{document}
\title{A Linear Lower Bound for Convex Body Chasing in Every Norm}

\author{
  Nicolas Christianson\thanks{Department of Computer Science, Johns Hopkins University. Email: \href{mailto:christianson@jhu.edu}{\nolinkurl{christianson@jhu.edu}}.}
}

\date{\monthyeardate\today}

\maketitle

\begin{abstract}%
    We establish an \(\Omega(d)\) lower bound on the competitive ratio of convex body chasing in $\R^d$ with movement costs measured in any norm. Previously, the best known lower bound valid across all norms was $\Omega(d^{1/6})$, established by Papazov~\cite{papazov2023role}; stronger lower bounds were known for $\ell_p$ norms, where Bubeck, Klartag, Lee, Li, and Sellke had established a lower bound of $\Omega(\max\{\sqrt{d}, d^{1-1/p}\})$~\cite{bubeck_chasing_2019}. In particular, a linear lower bound was known only for the $\ell_\infty$ norm, realized by the standard construction of chasing faces of a hypercube. Our lower bound strictly improves on the prior universal and $\ell_p$ lower bounds for $p < \infty$ and shows that the functional Steiner point algorithm of Sellke~\cite{sellke_chasing_2023}---which has competitive ratio $\calO(d)$---is asymptotically optimal for every norm. Our approach is based on the recursive construction of Bubeck, Coester, and Rabani~\cite{bubeck_randomized_2022} for metrical service systems, which gives a lower bound of $\Omega(r^2)$ for requests of size $\calO(r)$. We show that requests in their construction, each comprising a set of discrete points in a graph metric, can be mapped to the vertices of convex requests in a $(2r+1)$-dimensional $\ell_p$ vector space while approximately preserving movement costs up to a factor of $\calO(r)$; the general result then follows via Milman's quotient-of-a-subspace theorem~\cite{milman_1985}.
\end{abstract}

\section{Introduction}

We consider the problem of \emph{convex body chasing} in $\R^d$ endowed with an arbitrary norm $\|\cdot\|$: starting from a point $\mathbf{x}_0 \in \R^d$, an algorithm faces a sequence of nonempty closed convex sets $K_1, \ldots, K_T \subseteq \R^d$ that are revealed online, where $T \in \N$ is unknown a priori. After seeing body $K_t$, the algorithm must choose a point $\mathbf{x}_t \in K_t$, and pays the cost $\|\mathbf{x}_t - \mathbf{x}_{t-1}\|$ of moving from its previous decision. The algorithm's total cost is its total movement $\sum_{t=1}^T \|\mathbf{x}_t - \mathbf{x}_{t-1}\|$.

For an online algorithm $\alg$, let $\alg(K_1, \ldots, K_T)$ denote its cost on the instance consisting of bodies $K_1, \ldots, K_T$ (or expected cost, if $\alg$ is randomized), and let $\opt(K_1, \ldots, K_T)$ be the cost of the \emph{offline optimal} algorithm that knows all bodies in advance and can optimize its movement accordingly. The standard goal is to design an algorithm with the smallest possible competitive ratio:

$$\mathrm{CR}(\alg) \coloneqq \sup_{K_1, \ldots, K_T; T \in \N} \frac{\alg(K_1, \ldots, K_T)}{\opt(K_1, \ldots, K_T)}.$$

Convex body chasing was first introduced by Friedman and Linial~\cite{friedman_convex_1993}, who focused on the Euclidean ($p=2$) setting, establishing a lower bound of $\sqrt{d}$ on the competitive ratio of any algorithm and conjecturing that a finite competitive ratio (dependent on $d$) was possible.
After several decades of progress, \cite{bubeck2019competitively} resolved this conjecture, giving an upper bound of $2^{\calO(d)}$. Subsequent work of \cite{argue_chasing_2021,sellke_chasing_2023} improved the upper bound to $O(\min\{d, \sqrt{d \log T}\})$ in Euclidean space. In particular, the \emph{functional Steiner point} algorithm of \cite{sellke_chasing_2023} has competitive ratio $d$ for \emph{any} norm. These results also imply an $\calO(d)$ competitive ratio for convex \emph{function} chasing, where instead of receiving a body $K_t$ and the hard constraint of containment at each time, the algorithm receives a convex cost function $f_t : \R^d \to \R_+$, and pays the cost of its decision $f_t(\mathbf{x}_t)$, in addition to movement cost \cite{kranakis_chasing_2016,bubeck2019competitively}.

On the other hand, \cite{bubeck_chasing_2019} established a lower bound of $\Omega(\max\{\sqrt{d}, d^{1-1/p}\})$ for the $\ell_p$ norm with any $p \in [1, \infty]$ by combining the ``chasing faces of a hypercube'' construction of \cite{friedman_convex_1993} with a Hadamard matrix construction. Their construction is a \emph{nested} instance (meaning each body is contained in the previous one), and they also provide an algorithm for the nested setting which obtains a matching upper bound (up to log factors) of $\tilde{\calO}(\max\{\sqrt{d}, d^{1-1/p}\})$. Their lower bound is linear exactly when $p = \infty$, and shows that Sellke's functional Steiner point algorithm is optimal for general (not-necessarily-nested) convex body chasing with the $\ell_\infty$ norm \cite{sellke_chasing_2023}. More generally, the best universal lower bound across norms is $\Omega(d^{1/6})$, established by \cite{papazov2023role} by extending the hypercube construction via a bound on the Banach-Mazur distance to the $\ell_\infty$ ball. Thus there remains a polynomial gap between the best known upper and lower bounds on the competitive ratio of convex body chasing---up to $\Theta(d^{5/6})$ for general norms, and $\Theta(\sqrt{d})$ for $\ell_p$ norms with $p \in [1, 2]$, which includes the Euclidean case originally considered by Friedman and Linial~\cite{friedman_convex_1993}---leaving open the question of whether convex body chasing with a sublinear competitive ratio is possible in any norm.

In this paper, we resolve this gap, showing that the competitive ratio for general convex body chasing is $\Omega(d)$ for every norm:

\begin{theorem}\label{thm:main}
There is a universal constant \(c>0\) such that, for every \(d\ge 2\)
and every norm $\|\cdot\|$ on $\R^d$, any algorithm for convex
body chasing in $(\R^d, \|\cdot\|)$ has competitive ratio at
least $cd$.
\end{theorem}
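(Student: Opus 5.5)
The plan follows the route sketched in the abstract. We first prove a lower bound in a fixed model space, and then transfer it to an arbitrary norm using Milman's quotient-of-a-subspace theorem.

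\textbf{Step 1 (discrete hard instance).} Start from the recursive construction of Bubeck, Coester, and Rabani~\cite{bubeck_randomized_2022} for metrical service systems. For a parameter \(r\), it produces a graph metric and a sequence of requests \(R_1,\dots,R_T\). Each request is a set of at most \(\calO(r)\) points, and any (even randomized) online algorithm pays \(\Omega(r^2)\) times the offline optimum.

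\textbf{Step 2 (convexification).} Embed the relevant points of the graph metric into \(\ell_p^{m}\) with \(m=2r+1\). The aim is that each request \(R_t\) maps to a point set \(\phi(R_t)\) forming the vertex set of its convex hull. Set \(K_t=\conv\phi(R_t)\). We need two properties, which we will verify via the recursive structure of the instance.
\begin{enumerate}
\item[(a)] \(\opt\) does not increase: \(\phi\) is Lipschitz on the relevant points, so the offline path through request points maps to a path of comparable cost.
\item[(b)] Online cost drops by at most \(\calO(r)\). There is a rounding map from points of \(K_t\) back to points of \(R_t\). Under this map, any online convex-body algorithm yields an online (possibly randomized, e.g.\ by sampling from convex-combination weights) MSS algorithm. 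Its movement is at most \(\calO(r)\) times the movement in \(\ell_p^m\).
\end{enumerate}
This gives a competitive ratio lower bound of \(\Omega(r^2/r)=\Omega(r)=\Omega(m)\) in \(\ell_p^m\), say \(\ell_2^m\). The natural tool for (b) is barycentric coordinates relative to the vertices of \(K_t\): a point \(x\in K_t\) induces a distribution on \(R_t\), and one shows the earth-mover cost between consecutive distributions is \(\calO(r)\|x_t-x_{t-1}\|\). Distinct requests share structure in the recursion, so the distortion must be controlled uniformly. We expect this step to be the main obstacle, and would first try placing points on a moment-curve-like or simplex-like configuration in which the \(2r+1\) coordinates encode recursion level and position.

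\textbf{Step 3 (transfer to every norm).} Given any norm on \(\R^d\), Milman's quotient-of-a-subspace theorem~\cite{milman_1985} provides a subspace \(E\) and a quotient \(F=E/W\) with \(\dim F\ge cd\) whose norm is \(C\)-isomorphic to \(\ell_2^{\dim F}\). Run the Step 2 instance in \(F\) with \(m=\dim F\) via this isomorphism; ratios change only by a constant factor. Lift it to \(\R^d\) by defining \(K_t=q^{-1}(K'_t)\subseteq E\), where \(q:E\to F\) is the quotient map. These sets are closed and convex.

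Two checks complete the transfer. Any algorithm in \(\R^d\) can be assumed to stay in \(E\), since every \(K_t\subseteq E\). Its image under \(q\) is an algorithm in \(F\) whose movement is no larger, because the quotient norm is at most the norm: \(\|q(x)\|_F\le\|x\|\). Conversely, any path in \(F\) lifts to \(E\) with cost increased by at most a factor \(1+\epsilon\), by the definition of the quotient norm as an infimum. So \(\opt\) is preserved up to constants. Hence the competitive ratio in \(\R^d\) is \(\Omega(\dim F)=\Omega(d)\). For small \(d\), the constant \(c\) is absorbed by the trivial bound of \(1\).
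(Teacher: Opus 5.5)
Your Steps 1 and 3 match the paper. Step 3 is essentially its Milman transfer: the paper phrases it through a linear map $Q:E\to\R^k$ with $B_2^k\subseteq Q(B_E)\subseteq M\cdot B_2^k$ and intersects preimages with a ball to make them compact, but your quotient formulation works equally well.

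The genuine gap is Step 2, which is the whole technical content of the theorem. You state the two properties the embedding must have, but you give no embedding and no argument for (b). The configurations you suggest would not supply one. A true simplex placement needs one dimension per point, and the graph has vastly more than $2r+1$ vertices. Reusing $2r+1$ coordinates across the many copies at each level is exactly what must be justified. A moment curve gives affine independence of small subsets, hence unique barycentric weights, but uniqueness is beside the point. What is needed is a quantitative statement: for consecutive requests $S,S'$ and \emph{any} distributions $P$ on $S$ and $Q$ on $S'$, the transport cost $W_1(P,Q)$ is bounded by $\calO(r)$ times the movement of the weighted averages. Nothing in a curve-type placement prevents mass changes at several far-apart waiting vertices from cancelling in the average.

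The missing idea is the paper's map
\[
\Psi(v)=\bigl(h(v),(\psi_{s,L}(v),\psi_{s,R}(v))_{s=1}^r\bigr).
\]
Here $h(v)$ is global depth. For each level $s$, $\psi_{s,a}(v)$ equals the distance from $v$ to the nearer endpoint of the level-$s$ copy containing $v$ when $v$ lies on branch $a$ of that copy, and $0$ otherwise. This is summed over all level-$s$ copies, so every copy at that level shares the same two coordinates. Each coordinate is $1$-Lipschitz, which gives (a) with loss $D^{1/p}$.

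For (b) the paper proves $W_1(P,Q)\le 8\|\sum_v (Q(v)-P(v))\Psi(v)\|_1$. This rests on structural facts you would have to establish:
\begin{enumerate}
\item Consecutive requests differ by a single local change in one copy (an edge traversal, the entrance split, or a Stage~3 removal), plus waiting vertices with at most one per higher level.
\item On $S\cup S'$ only one copy per level has nonzero branch functions, so reusing coordinates causes no cancellation.
\item Each waiting vertex $z$ is the unique vertex of $S\cup S'$ where its own level/branch coordinate is nonzero, so its mass change is read off exactly.
\item Co-occurring vertices on opposite branches satisfy $d_G(z,v)\le 4g_{H_z}(z)$, so moving $z$'s mass change to the local change costs at most four times that coordinate's movement.
\end{enumerate}
The remaining local change is then charged to depth or to the current level's coordinates. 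Combining this with $\|\cdot\|_1\le D^{1-1/p}\|\cdot\|_p$ gives an online loss of $8D^{1-1/p}$. The product of the two losses is $8D$, which turns $\Omega(r^2)$ into $\Omega(r)$. Without this construction and lemma, your Step 2 only restates what needs to be true.
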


Notably, this implies that Sellke's functional Steiner point algorithm~\cite{sellke_chasing_2023}, which has competitive ratio $\calO(d)$, is optimal (up to a constant factor) for \emph{every} norm; this also resolves in the negative the question of whether a sublinear competitive ratio is possible in any norm (such as the Euclidean norm). Moreover, this linear lower bound stands in stark contrast to the nested case, where a sublinear competitive ratio is possible for every $\ell_p$ norm with $p \in [1, \infty)$; thus we establish that general convex body chasing is, in general, \emph{strictly} harder in a competitive sense than nested chasing.

\paragraph{Overview of the proof.}
Our starting point is the recursive lower bound construction of Bubeck, Coester, and Rabani~\cite{bubeck_randomized_2022} for metrical service systems, and specifically the construction with ``small requests'' described in Section 3.6 of their paper (which also applies to layered graph traversal). They give a recursively defined family of graphs indexed by $r \in \{0, 1, 2, \ldots \}$ where, for the graph at level $r$, the requests contain at most $r + 1$ vertices of the graph; the offline optimal strategy can follow the shortest path through the requested vertices, while any (possibly randomized) online algorithm must pay a multiplicative factor $\Omega(r^2)$ more in expectation.
In \Cref{sec:recursive} we describe their construction and relevant results that we will use in our reduction.

Our construction for convex body chasing will replace each discrete request by the convex hull of its image
under a map \(\Psi\) into $\R^D$ with $D = 2r + 1$.
Convexity gives the algorithm additional flexibility relative to the original discrete instance: a point in a convex hull
can be any weighted average of the mapped vertices. In particular, small movement of that average does not necessarily imply that the corresponding weights can be moved cheaply between vertices in the original graph.
Our map is designed to avoid this problem on the requests in the construction: we show that the distance between two weighted averages controls the transportation cost between the weights, which allows us to relate the cost of a convex body chasing algorithm to that of a randomized algorithm for the original discrete requests using the former's weights.

The coordinates of the map $\Psi$ have a simple geometric interpretation. Each recursive
graph has a source and a target, and all paths that move forward from
the source to the target have the same length. The first coordinate
of \(\Psi(v)\) is the distance traveled from the overall source to \(v\) along
such a path; we call this its \emph{depth}. However, depth alone cannot
distinguish vertices on different branches and at different levels of recursion. We therefore add two coordinates for each level of recursion, one for each branch. On vertices in its branch, a coordinate equals the distance of that vertex to the nearer endpoint of the graph copy it resides in, and is zero elsewhere. In particular, these coordinates return to zero at the end of a graph copy at a given level, so the same pair of coordinates can be used again to simulate the next graph copy. See \Cref{sec:map} for further details.

This construction allows us to relate the movement cost of a convex body chasing algorithm to the transportation cost of a corresponding randomized algorithm for the discrete request sequence (Lemma~\ref{lem:transport}). Then, in Section~\ref{sec:convexification}, we use this comparison to invoke the \cite{bubeck_randomized_2022} lower bound, losing only a factor \(8D^{1-1/p}\) on the online cost and a factor \(D^{1/p}\) on the offline cost. The product of these factors is \(8D\), which implies the desired $\Omega(d)$ lower bound for $\ell_p$ movement costs with any $p \in [1, \infty]$, since we can choose $r = \Theta(d)$. To extend the result to arbitrary norms, we invoke Milman's quotient-of-a-subspace theorem~\cite{milman_1985}, which allows transferring the Euclidean construction into a subspace of a general normed vector space with only a constant-factor loss on the lower bound.

\paragraph{AI use disclosure.}
The lower bound presented in this paper was obtained with the assistance of OpenAI's GPT-6 Pro model. See Section~\ref{sec:ack} for further details on the role of this model in developing the results and drafting the manuscript.

\section{An overview of the construction in \cite{bubeck_randomized_2022}}\label{sec:recursive}

We first give an overview of the lower bound for metrical service systems (which can be viewed as chasing finite sets in a metric space) established by Bubeck, Coester, and Rabani~\cite{bubeck_randomized_2022}; the construction and results detailed in this section are from Sections 3.1--3.6 of their paper (in particular, the construction is the ``small set'' construction described in Section 3.6 of their paper), and we include them for completeness. Then, in the next section, we describe how their construction is mapped to a convex body chasing instance.

In the metrical service systems problem with size $r \in \N$, we have a metric space $(M, d)$ and a starting point $x_0 \in M$. At each time, an adversary requests a nonempty subset of points $\rho_t \subseteq M$ with cardinality $|\rho_t| \leq r$. In response, the algorithm must choose a point $x_t \in \rho_t$, and incurs cost equal to the distance $d(x_t, x_{t-1})$ from its previous point. If the algorithm is randomized, its decision constitutes a probability distribution over the points in $\rho_t$, and consecutive distributions can be coupled so that it pays movement (in expectation) equal to the optimal transportation distance between the distributions.

\cite{bubeck_randomized_2022} constructs a recursive family of metric spaces defined by finite undirected weighted graphs $G$, where vertices of the graph correspond to points in the metric, and the metric distance corresponds to the shortest-path distance between two vertices; we write this distance as $d_G(u, v)$ for vertices $u$ and $v$. For a vertex $v$ of $G$, we will abuse notation and write $v \in G$. They also construct a distribution over request sequences for which the online algorithm's
expected cost grows quadratically with the recursion level, relative
to the offline optimum. We describe their construction, focusing on
the modification in Section 3.6 of their paper that keeps requests small (of size proportional to the recursion level), and quote
their lower bound. We also record the geometric properties needed for our reduction. Then, in Section~\ref{sec:map}, we will describe how these instances are mapped to a vector space to enable our lower bound on convex body chasing.

\subsection{The recursive family of graphs}\label{subsec:graphs}

The construction in \cite[Section 3.6]{bubeck_randomized_2022} defines a recursive family of graphs \(G_0,G_1,\ldots\), each with a designated source and target vertex; each sequence of requests constituting a problem instance will start at the source and terminate at the target. Fix a sequence of natural numbers $m_0 \leq m_1 \leq m_2 \leq \cdots$ and real numbers $C_r \geq 1$ for $r \in \{0, 1, 2, \ldots\}$, with $m_r \geq C_r$ and $C_0 = 1$. The parameter $m_r$ will determine the number of copies of $G_r$ used to construct $G_{r+1}$, and $C_r$ will denote a lower bound on the competitive ratio, determined later by induction; we index with the variable $r$, which we call the \emph{level} of the recursion.
The base graph $G_0$ is a single edge of length one, with source vertex $s_0$ and target vertex $t_0$.

Proceeding with the induction, suppose that $G_r$ has been constructed; in the following, when we describe ``scaling'' $G_r$, this is achieved by homogeneously scaling its edge lengths to achieve a specified source-to-target distance. We construct $G_{r+1}$ by specifying a source vertex $s_{r+1}$ and target vertex $t_{r+1}$, and connecting them via two parallel branches composed of scaled copies of $G_r$. Specifically, each of the two branches of $G_{r+1}$ consists of, in order,

\begin{enumerate}
 \item an entrance edge of length $b_r \coloneqq \frac{C_r}{m_r} \in [\frac{1}{m_r}, 1]$;
 \item \(m_r^2\) normalized copies of \(G_r\), each scaled to have
       source-to-target distance \(1/m_r\);
 \item \(2m_r\) normalized copies of \(G_r\), each scaled to have
       source-to-target distance one.
\end{enumerate}
See Figure~\ref{fig:block} for an illustration. The end of each entrance edge is identified with the source of the first copy of $G_r$ along its branch. We then join consecutive copies of $G_r$ along each branch by identifying the target vertex of one with the source vertex of the next. The target vertices of both of the final copies of $G_r$ on each branch are identified with the target $t_{r+1}$. The two branches share only the level-$(r+1)$ source $s_{r+1}$ and target $t_{r+1}$ as vertices; each branch thus has length
\begin{equation}
 L_{r+1}=3m_r+b_r,
 \label{eq:parent-length}
\end{equation}
which is equivalently the level-$(r+1)$ source-to-target distance $d_{G_{r+1}}(s_{r+1}, t_{r+1})$.

We will refer to the copies of $G_r$ placed within $G_{r+1}$ as ``small copies'' if they are normalized to have length $1/m_r$, and ``large copies'' if they are normalized to have length one. Note that these lengths are defined relative to the scale of the level-$(r+1)$ graph $G_{r+1}$: when this graph is used to build $G_{r+2}$, it will also be scaled accordingly (and thus all of the copies of $G_r, G_{r-1}, \ldots$ inside of $G_{r+1}$ will be scaled down further).

\begin{figure}[t]
\centering
\begin{tikzpicture}[x=1cm,y=1cm,>=Latex,
  copy/.style={draw,rounded corners=1pt,minimum height=.62cm,align=center,
    font=\small,fill=black!3},
  point/.style={circle,fill,inner sep=1.6pt}]
 \node[point,label=left:\(s_{r+1}\)] (s) at (0,0) {};
 \node[copy,text width=2.8cm] (sl) at (3.05,.82)
       {\(m_r^2\) copies of\\length \(1/m_r\)};
 \node[copy,text width=2.8cm] (sr) at (3.05,-.82)
       {\(m_r^2\) copies of\\length \(1/m_r\)};
 \node[copy,text width=2.5cm] (ll) at (6.65,.82)
       {\(2m_r\) copies of\\length \(1\)};
 \node[copy,text width=2.5cm] (lr) at (6.65,-.82)
       {\(2m_r\) copies of\\length \(1\)};
 \node[point,label=right:\(t_{r+1}\)] (t) at (9.05,0) {};
 \draw[->] (s)--node[above left,align=center,font=\small]{edge of\\length \(b_r\)}(sl.west);
 \draw[->] (s)--node[below left,align=center,font=\small]{edge of\\length \(b_r\)}(sr.west);
 \draw[->] (sl)--(ll); \draw[->] (sr)--(lr);
 \draw[->] (ll)--(t); \draw[->] (lr)--(t);
\end{tikzpicture}
\caption{The graph \(G_{r+1}\) in the small-request construction of \cite[Section 3.6]{bubeck_randomized_2022}. Each box is a chain of copies of
\(G_r\), with the target of one copy serving as the source of the
next; the displayed length is that of an individual copy. The requests
first alternate through the small copies, then advance a randomly
chosen branch through one large copy in each of \(m_r\) rounds, and
finally finish the branch that advanced fewer times. Arrows indicate
the direction used to define depth, but an algorithm may move
along edges in either direction.}
\label{fig:block}
\end{figure}

\paragraph{Paths and depth.}
Although the graphs are constructed with (and movement costs use) undirected edges and distances, it is useful to
orient edges from source to target at each level of the recursion to reason about instances.
A directed path from source $s_{r+1}$ to target $t_{r+1}$ through \(G_{r+1}\) traverses the entrance edge and
every copy of $G_r$ on one branch; by symmetry, regardless of the choices made within each copy, the resulting path length is $L_{r+1} = d_{G_{r+1}}(s_{r+1}, t_{r+1})$.
Every vertex in $G_{r+1}$ lies on such a path; the same reasoning applies to vertices in graphs of all levels of recursion.

Let $H$ be a copy of the graph $G_r$ within some graph $G$ at a higher level of recursion (say, $G = G_{r+1}$); when discussing such a copy $H$, we will consider distances with respect to the higher-level graph $G$ it resides in (i.e., with the scaling of $G_r$ used to construct $G$). For a vertex $v \in H$, we define the \emph{local depth} $h_H(v)$ of $v$ to be the length of any directed path from the source $s_H$ to $v$---i.e., $h_H(v) = d_G(s_H, v)$; this is well-defined by symmetry. We write $h(v)$ for the depth of a vertex measured from the source $s_G$ of the full graph. Clearly we can decompose the depth of a vertex $v \in H$ in terms of the depth of the source of $H$ and the local depth of $v$: for any $v \in H$, $h(v) = h(s_H) + h_H(v)$.

Depth increases by exactly the edge length along each directed edge,
so its absolute change along any undirected path is at most that
path's length. Thus $h$ is $1$-Lipschitz in the full graph
distance: for $u, v \in G$,
$$|h(u) - h(v)| \leq d_G(u, v).$$
Moreover, for a copy $H$, $h_H$ is likewise $1$-Lipschitz in the full graph distance.
If its local source-to-target length is $L_H$, we have both
\[
 d_G(s_H,v)=h_H(v) \text{ and }
 d_G(v,t_H)=L_H-h_H(v)
\]
for $v \in H$. The triangle inequality then implies that, for two vertices $u, v \in H$,
\begin{equation}
 d_G(u,v)\le
 \min\{h_H(u)+h_H(v),\,2L_H-h_H(u)-h_H(v)\}
 \le L_H.
 \label{eq:copy-distance-bound}
\end{equation}
That is, the diameter of any copy $H$ is exactly $L_H$.

\subsection{The request sequence}\label{subsec:requests}

We now describe the random request sequence on each graph, following
\cite[Sections 3.2 and 3.6]{bubeck_randomized_2022}; recall that at each time, the request specifies a set of points that are feasible for the algorithm.

The request sequence on \(G_0\) simply consists of one request to its source \(\{s_0\}\), followed by a request to its target \(\{t_0\}\). At each
higher level of the recursion, requests progress through the graph copies on the two branches, first alternating between them and then choosing branches at random to progress. Finally, whichever branch advanced a greater number of times is made infeasible for the rest of the sequence, and the requests continue along the surviving branch to the target. An offline algorithm can choose this surviving branch from the start, while an online algorithm must choose whether to progress before all random choices are known.

To make the recursion precise, suppose that the request sequence on \(G_r\) has
been specified, beginning at \(\{s_r\}\) and ending at \(\{t_r\}\); we will define the level-$(r+1)$ request sequence on \(G_{r+1}\).
We first request its source
$\{s_{r+1}\}$, then we simultaneously request the pair of vertices at the ends of the two entrance edges. For each branch, we will keep track of its \emph{current vertex}, which will be initialized as this terminal vertex at the end of the entrance edge (equivalently, the source vertex of the first small copy of $G_r$ along that branch). In the following, we will distinguish the two branches as ``left'' and ``right.''

In the rest of the sequence, ``advancing a branch through its next copy'' means that, if $H$ is the next copy of $G_r$ along a branch---i.e., the copy for which the current vertex is the source---we issue the level-$r$ request sequence in $H$ using fresh, independent randomness; we call this a \emph{call} in $H$. If $z$ is the current vertex on the opposite branch (and if it has not been removed---i.e., if we haven't yet reached Stage 3 below), we also include $z$ as a feasible vertex in every request of the call: any level-$r$ request $U \subseteq H$ in a call to $H$ becomes a level-$(r+1)$ request $U \cup \{z\}$.
Thus the algorithm may remain at $z$ while requests advance through
$H$; we call $z$ a \emph{waiting vertex}. The call begins
with $\{s_H,z\}$ and ends with $\{t_H,z\}$, at which point
the advancing branch's current vertex becomes $t_H$. This rule
allows us to specify the rest of the sequence in three stages.

\paragraph{Stage 1: alternate through the small copies.}
For \(j=1,\ldots,m_r^2\), advance through the \(j\)th small copy of $G_r$
on the left and then the \(j\)th small copy on the right, retaining
the other branch's current vertex as a feasible waiting vertex throughout each call. Each small
copy has length \(1/m_r\), so the two current vertices never differ
in depth by more than \(1/m_r\) and both finish at depth \(b_r+m_r\). At the end of this stage, the current vertex of each branch is the source vertex of the first large copy of $G_r$ on that branch.

\paragraph{Stage 2: choose large copies at random.}
For each of \(m_r\) rounds, choose a branch independently and uniformly at random, and
advance through its next large copy of $G_r$, again retaining the other branch's
current vertex as a waiting vertex. Each random branch selection is independent of all previous choices and of the randomness in the recursive calls to sub-requests.
If \(X\) is the number of times the left branch is selected, then \(X\sim\operatorname{Bin}(m_r,1/2)\), and
the branches have advanced through \(X\) and \(m_r-X\) large copies,
respectively. The final request of this stage consists of the two current vertices after the left and right branches have advanced through $X$ and $m_r - X$ large copies, respectively. The current vertex of a branch with zero advances remains at the source of its first large copy.

\paragraph{Stage 3: finish the less advanced branch.}
Request \emph{only} the current vertex on the branch that advanced fewer times in Stage 2, using a fixed rule to break a tie. We will remove the current vertex from the more advanced branch and will issue no more requests to that branch---i.e., there will be no waiting vertex on that branch.
Then advance through each of the remaining large copies of $G_r$ on the surviving branch.
The final request is $\{t_{r+1}\}$.

\medskip

Throughout a recursive call in a copy \(H\), all waiting vertices
retained at higher levels remain feasible. Let \(Z\) denote this
fixed set of waiting vertices; then each local request \(U\) in $H$ becomes a higher-level request \(U\cup Z\), with at
most one vertex in \(Z\) from each higher level. In particular,
the call to $H$ begins with \(\{s_H\}\cup Z\) and ends with
\(\{t_H\}\cup Z\). We retain both endpoint requests, including
repetitions: within each stage and between Stages 1 and 2, consecutive calls end and begin with the same request (since the target of one copy along a branch is the source of the next).
This is not the case between Stages 2 and 3, because of the removal of the more advanced waiting vertex.

Let \(\mathcal D_r\) denote the resulting distribution of level-$r$ request sequences
on \(G_r\). Every realized sequence is finite, but its length may depend
on the random branch choices.\footnote{The construction in \cite[Section 3]{bubeck_randomized_2022} pads the sequences with repeated requests so that all level-$r$ calls have the same duration; this padding is not critical for their conclusion or for our analysis.}
We will need the following properties of the construction, which were established in \cite[Section 3]{bubeck_randomized_2022}.

\begin{lemma}[Request size and offline path]\label[lemma]{lem:offline-metric}
Every request in a level-\(r\) instance has at most \(r+1\) vertices.
Every realized full sequence can be served along a directed shortest
path from its source to its target; thus the offline optimum is the
source-to-target distance $L_r$.
\end{lemma}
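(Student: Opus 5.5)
Both claims follow by induction on the level $r$, tracking how requests are built from lower-level requests.

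\emph{Request size.} I will prove the stronger statement that every request in a level-$r$ instance has at most $r+1$ vertices. Moreover, each request is either a single endpoint ($\{s_r\}$ or $\{t_r\}$) or a local request to which at most one waiting vertex per level has been added.

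For $r=0$ the requests are $\{s_0\}$ and $\{t_0\}$, each of size $1$. For level $r+1$, the requests are of the following kinds:
\begin{itemize}
\item the source request $\{s_{r+1}\}$, of size $1$;
\item the pair of entrance-edge endpoints, of size $2\le r+2$;
\item requests $U\cup\{z\}$ issued during a call in a Stage 1 or Stage 2 copy, with $|U|\le r+1$ by induction, so of size at most $r+2$;
\item the two-vertex final request of Stage 2;
\item the singleton at the start of Stage 3;
\item requests of Stage 3 calls, which are local level-$r$ requests with no waiting vertex, of size at most $r+1$;
\item the final request $\{t_{r+1}\}$.
\end{itemize}
All of these have size at most $r+2$, which closes the induction. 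Nesting is handled automatically, because the inductive hypothesis already counts the waiting vertices added inside $H$.

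\emph{Offline path.} Again by induction, every realized level-$r$ sequence admits a directed source-to-target path $P$, of length $L_r$, that visits a feasible vertex of each request in order. Concretely, there are times $\tau_1\le\dots\le\tau_T$ and vertices on $P$ in nondecreasing depth, with each chosen vertex lying in its request. The total movement of this schedule is then exactly the length of $P$, which is $L_r$.

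For $r=0$ the schedule is $s_0$ followed by $t_0$. At level $r+1$, let the offline player commit to the surviving branch, which is determined by the realized $X$. The schedule is as follows.
\begin{itemize}
\item Start at $s_{r+1}$ and traverse that branch's entrance edge.
\item During every call on the surviving branch, in Stages 1, 2 and 3, follow the inductively given path through the copy $H$. Each level-$r$ request $U$ is served at a vertex of $U$, hence of $U\cup Z$.
\item During every call on the other branch, stay at the surviving branch's current vertex. It is a waiting vertex of each such call, so every request $U\cup\{z\}$ is satisfied by $z$.
\item The Stage 2 final request contains the surviving current vertex, as does the Stage 3 singleton, since the surviving branch is by definition the less advanced one.
\item Between consecutive calls on the surviving branch the position is unchanged, because the target of one copy is the source of the next.
\item Finally, the last request $\{t_{r+1}\}$ is reached at the end of the last large copy.
\end{itemize}
The total movement is $b_r+m_r^2\cdot(1/m_r)+2m_r\cdot 1=L_{r+1}$, using the inductive cost $L_r$ scaled to $1/m_r$ or $1$ in each copy.

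For optimality, any feasible schedule starts at $\{s_{r}\}$ and ends at $\{t_{r}\}$, so it costs at least $d_{G_r}(s_r,t_r)=L_r$. Hence the offline optimum is exactly $L_r$.

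The only delicate step is checking that the offline player is never forced to move. This rests on two facts: the surviving branch's current vertex is always present as a waiting vertex during calls on the other branch, and it is exactly the Stage 3 singleton. Both hold by construction, and the first needs a brief check that waiting vertices from higher levels never interfere with the induction.
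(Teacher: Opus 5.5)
Your proof is correct and follows the paper's argument. For the size bound you induct on the level (each call adds at most one waiting vertex, and requests outside calls have at most two vertices); for the offline path you commit to the surviving branch, follow the inductive path in its copies, and stay at the waiting vertex while the other branch advances. You additionally make explicit the cost computation $b_r + m_r^2\cdot(1/m_r) + 2m_r = L_{r+1}$ and the matching lower bound $d_{G_r}(s_r,t_r) = L_r$, which the paper leaves implicit.
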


\begin{proof}[Proof.]
The base requests are singletons. At each recursive step, a call to a copy adds at most one waiting vertex to that copy's requests; requests outside of a call to a copy have at most two vertices. The size bound follows by induction.

To construct the offline path, choose the branch that ultimately survives the level-$r$ instance. Follow an offline directed path through each copy on that branch when it is called, staying at the waiting vertex when the other branch advances. Inducting on the copies, we obtain a feasible directed source-to-target path which, as established earlier, has length $L_r$.
\end{proof}

For a suitable choice of the parameter sequences $m_r$ and $C_r$, the construction yields the following distributional lower bound; we include a proof sketch for completeness.

\begin{lemma}[Bubeck, Coester, and Rabani, Sections 3.3-3.6~\cite{bubeck_randomized_2022}]\label[lemma]{lem:mss-lower}
Let \(L_r=d_{G_r}(s_r,t_r)\) be the offline optimal cost on the level-$r$ request family, i.e., the source-to-target distance in \(G_r\). Every (possibly randomized) online algorithm $\alg$ starting at \(s_r\) satisfies
\[
 \E[\alg(\sigma)]
 \ge \Omega(r^2)\cdot L_r,
\]
where the expectation is taken both over instances $\sigma\sim\mathcal D_r$ and the algorithm's randomness.
\end{lemma}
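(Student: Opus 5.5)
The plan is to prove the bound by induction on \(r\), through a normalized recursive inequality. Let \(C_r\) be the largest value such that every online algorithm starting at \(s_r\) pays at least \(C_r L_r\) in expectation on \(\sigma\sim\mathcal D_r\). The target is
\[
 C_{r+1}\ \ge\ C_r + c\sqrt{C_r}
\]
for a universal constant \(c>0\), with parameters \(m_r=\Theta(C_r)\) (rounded to an integer with \(m_r\ge C_r\)) and \(b_r=C_r/m_r=\Theta(1)\). The base case is \(C_0=1\). Since the continuous analogue \(C'=c\sqrt C\) solves to \(C=\Theta(r^2)\), a one-line induction then gives \(C_r=\Omega(r^2)\).

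\textbf{Step 1 (copies).} Every call to a copy \(H\) of \(G_r\) inside \(G_{r+1}\) is a fresh level-\(r\) instance, scaled by \(L_H/L_r\). By the inductive hypothesis, applied conditionally on the past, any portion \(p\) of the algorithm's probability mass that ends the call at \(t_H\) must pay, up to the cost of whatever mass stays at the waiting vertex, at least \(p\,C_r L_H\) in expectation. I would prove this conditional, mass-weighted version directly: restrict the algorithm to the sub-instance, and charge mass that leaves \(H\) for the waiting vertex (or for other waiting vertices \(Z\)) by its distance, which is at least its local depth by the Lipschitz property of \(h_H\) and \eqref{eq:copy-distance-bound}.

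\textbf{Step 2 (stage-by-stage accounting).} Track the algorithm's probability \(p\) of being on the left branch.
\begin{itemize}
\item[(a)] In Stage 1, every small copy is eventually traversed by the mass on its branch, so the online cost is at least \(C_r\) times the total advanced length, which is about \(2m_r\) over the two branches. The only alternative is to shuttle mass between branches, which costs at least the depth of the current vertices via \(s_{r+1}\). That is at least \(2b_r\), and this is exactly why \(b_r=C_r/m_r\) is chosen: crossing costs as much as following one small copy.
\item[(b)] In Stage 2, each round advances one large copy. The mass following it pays \(C_r\); the mass waiting pays nothing now but falls behind.
\item[(c)] In Stage 3, mass on the removed, more advanced branch must move to the survivor. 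Going back through \(s_{r+1}\) is expensive, and going through \(t_{r+1}\) costs at least the remaining length of both branches.
\end{itemize}
The offline cost is \(L_{r+1}=3m_r+b_r\). The baseline strategy that follows every copy with all its mass pays about \(C_r\) times the total advanced length, which already yields at least \(C_r L_{r+1}\) after normalization.

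\textbf{Step 3 (the \(\sqrt{m_r}\) gain; the main obstacle).} The heart of the argument is to show that in Stage 2 any strategy loses an additive \(\Omega(\sqrt{m_r})\) beyond \(C_r\) times the offline length. The reason is that \(|2X-m_r|\) has expectation \(\Theta(\sqrt{m_r})\), where \(X\) is the number of times the left branch is selected.

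To make this precise, I would define a potential equal to the expected future cost under the baseline strategy, and show that the Stage 2 cost plus the Stage 3 correction dominates a martingale term. The key observation is this. Mass that committed to the eventually losing (more advanced) branch wasted \(C_r\) per excess copy, and there are about \(|2X-m_r|/2\) such copies. Mass that waited on the eventual winner still pays \(C_r\) per copy later. Hence every hedging strategy pays extra proportional to the unbalanced advance \(|2X-m_r|\), in expectation \(\Omega(\sqrt{m_r})\). A symmetric split of mass does not help, because mass on both sides pays for both branches' copies.

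The hardest part will be handling mass that reacts adaptively to the random walk, for example by switching branches mid-stage. I would first try to show that switching costs at least twice the depth and therefore dominates, so that one can reduce to strategies that never switch in Stage 2. For those strategies the computation becomes an optional-stopping argument on the binomial walk.

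\textbf{Step 4 (closing the recursion).} Dividing by \(L_{r+1}=\Theta(m_r)\) gives
\[
 C_{r+1}\ \ge\ C_r+\Omega\bigl(\sqrt{m_r}\,\bigr)\cdot\frac{C_r}{m_r}\cdot\Theta(1).
\]
With \(m_r=\Theta(C_r)\) this becomes \(C_{r+1}\ge C_r+c\sqrt{C_r}\). I would still need to check that the constraints \(m_r\ge C_r\) and \(b_r\in[1/m_r,1]\) are consistent with this choice, which fixes the constants. Iterating the inequality gives \(C_r=\Omega(r^2)\), and hence \(\E[\alg(\sigma)]\ge\Omega(r^2)L_r\), as stated in \cref{lem:mss-lower}.
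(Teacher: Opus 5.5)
Your architecture matches the paper's. You use induction on the level with $C_{r+1}\ge C_r\bigl(1+\Omega(m_r^{-1/2})\bigr)$ and $m_r=\Theta(C_r)$, and you choose $b_r=C_r/m_r$ so that a crossing costs at least as much as following a copy. The gain comes from $\E|X-m_r/2|=\Theta(\sqrt{m_r})$, and mid-call deviations are deferred to a strengthened induction, which the paper also only cites from \cite[Section 3.5]{bubeck_randomized_2022}. The gap is in Step 3, the heart of the argument. You plan to reduce to strategies that never switch in Stage 2 because ``switching costs at least twice the depth and therefore dominates,'' but that reason does not justify the reduction. A Stage 2 switch costs between $2m_r$ and at most $4m_r$, i.e.\ only $\Theta(m_r)$, while the additive gain you need is $\Theta(\sqrt{m_r}\,C_r)=\Theta(m_r^{3/2})$. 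The price of a switch is therefore far too small to rule out an adaptive strategy that profits from what it has seen of the walk (e.g.\ jumping to whichever branch is currently ahead). You give no exchange argument showing that non-switching strategies are optimal, and the potential/martingale sketch does not supply one either.

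The missing idea is the paper's per-round charging, which handles adaptivity with no reduction at all. In each Stage 2 round the advancing branch is chosen independently of the algorithm's position, so with probability $1/2$ it is the algorithm's branch. In that case the algorithm pays at least $C_r$ whether it follows the copy or leaves (leaving costs $\ge 2m_r\ge C_r$). By linearity of expectation, Stage 2 costs at least $\tfrac12 m_rC_r$ for every adaptive strategy. Stage 3 costs, pointwise, at least $(2m_r-\min\{X,m_r-X\})C_r$, since every remaining large copy on the survivor must be traversed. Summing and using $\min\{X,m_r-X\}=m_r/2-|X-m_r/2|$ gives $(2m_r+\E|X-m_r/2|)C_r=(2m_r+\Theta(\sqrt{m_r}))C_r$. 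Adding $m_rC_r$ for Stage 1 and $b_r$ for the entrance edge yields your Step 4 recursion.

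Three smaller points:
\begin{itemize}
\item In Step 2(a), Stage 1 costs $m_rC_r$, not $C_r$ times $2m_r$. Each pair of small copies costs only $C_r/m_r$, because mass on one branch pays nothing while the other branch advances; taken literally, your count double counts.
\item In Step 1, charging a bail-out by the local depth $g_H$ is too weak by a factor of order $C_r$. What makes leaving $H$ unprofitable is the level-$(r+1)$ crossing cost, which is $\ge 2b_r$ in Stage 1 and $\ge 2m_r$ in Stage 2.
\item For randomized algorithms it is simpler to fix the seed, as the paper does, since the request distribution is oblivious; tracking probability mass is unnecessary.
\end{itemize}
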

\begin{proof}[Proof sketch.]
First consider a deterministic algorithm; the argument proceeds by induction. Suppose $C_{r-1}$ is the competitive ratio lower bound established at level $r-1$; the base case is $C_0 = 1$. At level $r$, the offline algorithm's cost is $L_r = 3m_{r-1} + \frac{C_{r-1}}{m_{r-1}} \leq 3m_{r-1} + 1$. On the other hand, the online algorithm's expected cost is at least $\frac{C_{r-1}}{m_{r-1}}$ (for the entrance edge); plus $m_{r-1}^2 \cdot \frac{C_{r-1}}{m_{r-1}} = m_{r-1}C_{r-1}$ for progressing through Stage 1; plus $\frac{1}{2}m_{r-1} C_{r-1}$ in expectation for Stage 2 (since in each of the $m_{r-1}$ rounds of this stage, the algorithm's current branch will be chosen to advance with probability $\frac{1}{2}$, in which case it must either advance at expected cost at least $C_{r-1}$ or switch to the other branch at cost at least $2m_{r-1}$); plus $(2m_{r-1} - \E[\min\{X, m_{r-1}-X\}])C_{r-1} = \left(\frac{3}{2}m_{r-1} + \Omega(\sqrt{m_{r-1}})\right)C_{r-1}$ for Stage 3, where $X$ is the number of times the left branch was chosen. Applying the inductive bound (as in the latter three cases) when the algorithm can switch branches after observing part of a recursive call requires the strengthened induction in \cite[Section 3.5]{bubeck_randomized_2022}.
Altogether, the online algorithm's competitive ratio is at least
$$C_r \geq \frac{3m_{r-1} + \Omega(\sqrt{m_{r-1}})}{3m_{r-1} + 1}C_{r-1} = \left(1 + \Omega\left(\frac{1}{\sqrt{m_{r-1}}}\right)\right)C_{r-1}.$$
Setting $m_{r-1} = \Theta(C_{r-1})$ and solving the recursion gives the result.

The same expected-cost lower bound holds for randomized algorithms: indeed, the request distribution is fixed independently of the algorithm’s decisions and randomness, so we may fix the algorithm’s random seed, apply the deterministic lower bound, and then average over the seed. The expectation then includes both the random request sequence and the algorithm's randomness.
See \cite{bubeck_randomized_2022} for full proof details.
\end{proof}

Since every feasible trajectory through an instance has cost at least $L_r$, the lower bound in the above lemma can be stated explicitly as $\E[\alg(\sigma)] \geq c_0 r^2 L_r$ for all $r \geq 1$, for a universal constant $c_0 > 0$.
In the next section, we will describe how to port this lower bound construction into the convex body chasing setting.

\section{Mapping requests into $\R^D$}\label{sec:map}

Fix a level-\(r\) graph \(G=G_r\), and let
\(\Lambda=d_G(s_r,t_r)\) denote its source-to-target distance.
We will construct a map \(\Psi:G\to\R^D\) from the vertices of $G$ to a real vector space of dimension $D = 2r + 1$; then, in the next section, we will define a convex body chasing instance by taking the convex hull of the image of each request in $G$ under this map.

For this reduction to (approximately) preserve the lower bound, the map we develop must satisfy two requirements. First, it must approximately preserve distances, e.g., the length of the mapped offline path cannot grow too much relative to the cost incurred by an online algorithm. Second, since we will be considering requests defined as the convex hull of the mapped vertices of $G$, the online algorithm will now have the flexibility to choose points that are the \emph{weighted averages} of these vertices, rather than the vertices themselves (as was the restriction in the discrete metric). We thus must ensure that convexification does not reduce the online cost too much relative to the offline optimum.
The second requirement is the main difficulty: in general, convexity can make discrete instances much easier, and arbitrarily mapping the full graph $G$ into an $\calO(r)$-dimensional vector space is unlikely to yield a challenging instance.
Instead, we will exploit the recursive structure of the discrete requests to retain the structural difficulty of the instance by mapping each \emph{level} of recursion to its own two coordinates, which will aggregate all copies at that level.

In the rest of this section, all lengths are measured in the full graph $G$.

\paragraph{Branch functions.}

Consider a copy $H$ of a level-$s$ graph $G_s$ embedded in $G$, for $1 \leq s \leq r$; if $s = r$, this is simply the graph $G$ itself. Recall that its local depth function $h_H(v)$ measures how far $v \in H$ is from the source $s_H$. For $v \in H$, let
\[
 g_H(v)=\min\{h_H(v),L_H-h_H(v)\}
\]
be the distance from \(v\) to whichever endpoint of $H$ is closest.
For each branch \(a\in\{L,R\}\), define $H$'s \emph{branch function} as
\begin{equation}
 \phi_{H,a}(v)=
 \begin{cases}
  g_H(v)&\text{if }v\text{ lies on branch }a\text{ of }H,\\
  0&\text{if }v\text{ lies elsewhere}.
 \end{cases}
 \label{eq:individual-hat}
\end{equation}
When $v$ is either the source or target vertex of $H$, $\phi_{H, L}(v) = \phi_{H, R}(v) = 0$, so these maps are well-defined where the two branches meet. $\phi_{H, a}$ is defined to be zero outside of $H$ so that we can aggregate it across different copies at the same recursion level.

\paragraph{Defining mapped coordinates by aggregating branch functions across copies.}

We will define the map $\Psi : G \to \R^D$ by assigning each (nonzero) level of recursion $s = 1, \ldots, r$ two coordinates, each of which will aggregate the branch functions $\phi_{H, a}(v)$ of a vertex $v$ across \emph{all} copies $H$ at level $s$. Specifically, for $s = 1, \ldots, r$ and $a \in \{L, R\}$, define

\begin{equation} \label{eq:aggregation}
 \psi_{s,a}(v) = \sum_{\text{copies $H$ of $G_s$}}\phi_{H,a}(v).
\end{equation}
Then for a vertex $v \in G$, we define the map $\Psi$ as
\begin{equation}
\Psi(v)=\bigl(h(v),(\psi_{s,L}(v),\psi_{s,R}(v))_{s=1}^r\bigr)
 \in\R^D.
 \label{eq:map}
\end{equation}
Its first coordinate is simply the (global) depth of $v$, measuring progress through the entire graph.
At each level $s$, at most one copy contains $v$ unless $v$ is the source or target of a copy at that level, in which case $\psi_{s, a}(v) = 0$. The corresponding pair of coordinates in the output of $\Psi$ records the
branch containing $v$ and its distance to the closest endpoint of that
copy; both entries are zero when no such copy exists. In particular, we have
\(\Psi(s_r)=\mathbf{0}\) and
\(\Psi(t_r)=(\Lambda,0,\ldots,0)\).

For example, when $r=2$, the five coordinates consist of global depth, two
coordinates shared by all level-1 copies, and two for the full
level-2 graph. As requests move from one level-1 copy to the next,
the second and third coordinates describe progress through each copy, increasing and then decreasing as the requests move from the source to the target of each copy.
As we will show below in Corollary~\ref{cor:shared-coordinates}, two consecutive requests will never require us to distinguish nonzero contributions from different copies at the same
level. Moreover, we show in the following lemma that, as a vertex moves, each coordinate of $\Psi$ cannot change more than the corresponding graph distance.

\begin{lemma}[Coordinatewise movement bound]\label[lemma]{lem:expansion}
Every scalar coordinate of \(\Psi\) is \(1\)-Lipschitz in graph
distance. In particular, for every \(u,v\in G\),
\begin{equation}
 \norm{\Psi(u)-\Psi(v)}_\infty\le d_G(u,v).
 \label{eq:expansion}
\end{equation}
\end{lemma}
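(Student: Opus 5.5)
The plan is to reduce to single edges. Since $d_G$ is a shortest-path metric, any $u,v$ are joined by a path whose length is $d_G(u,v)$. If a scalar function $f$ on vertices satisfies $|f(x)-f(y)|\le \ell(x,y)$ for every edge $\{x,y\}$ of length $\ell(x,y)$, then summing along a shortest path and using the triangle inequality gives $|f(u)-f(v)|\le d_G(u,v)$. So it suffices to show that each coordinate of $\Psi$ changes by at most the edge length across every edge. The $\ell_\infty$ bound \eqref{eq:expansion} then follows immediately by taking the maximum over coordinates.

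For the depth coordinate $h$, this was already observed in Section~\ref{subsec:graphs}: depth increases by exactly the edge length along each directed edge. It remains to treat $\psi_{s,a}$ for a fixed level $s$ and branch $a$. Fix an edge $e=\{x,y\}$. By the recursive construction, every edge of $G$ lies in at most one copy $H$ of $G_s$. Indeed, copies at a given level share at most endpoint vertices (source/target identifications), never edges. Edges at coarser levels, such as entrance edges of higher-level graphs, lie in no level-$s$ copy.

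\textbf{Edge in no level-$s$ copy.} Each endpoint of $e$ is either outside every level-$s$ copy or is a source or target of one. In both cases all terms $\phi_{H,a}$ vanish at that endpoint, so $\psi_{s,a}(x)=\psi_{s,a}(y)=0$.

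\textbf{Edge inside a unique copy $H$.} For any other copy $H'\ne H$, an endpoint of $e$ lying in $H'$ must be a shared endpoint of $H'$, where $\phi_{H',a}=0$. Hence $\psi_{s,a}$ agrees with $\phi_{H,a}$ on $\{x,y\}$. There are two subcases.

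\begin{itemize}
\item If $e$ lies on branch $a$ of $H$, then $\phi_{H,a}=g_H$ on both endpoints. Here $g_H=\min\{h_H,L_H-h_H\}$ is a minimum of two functions, each changing by exactly $\ell(x,y)$ along $e$ since $h_H$ does. A minimum of $1$-Lipschitz functions is $1$-Lipschitz, so the change is at most $\ell(x,y)$.
\item If $e$ lies on the other branch, then $\phi_{H,a}$ vanishes at both endpoints. This includes the case where an endpoint is $s_H$ or $t_H$, where $\phi_{H,a}$ is zero by definition.
\end{itemize}

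The only step that needs care is the bookkeeping claim that the level-$s$ copies overlap only at their source and target vertices, and that $\phi$ vanishes there. I would check this by induction on the construction. Distinct copies at level $s$ either sit in different positions along a chain, where they share only an identified target/source, or sit in disjoint parts of the recursive structure. This matches the remark after \eqref{eq:map} that at most one copy at each level contains $v$ unless $v$ is an endpoint of a copy.
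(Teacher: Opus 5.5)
Your proof is correct, but it is organized differently from the paper's. The paper argues globally. It first shows that each individual branch function $\phi_{H,a}$ is $1$-Lipschitz on all of $G$. This uses that $h_H(v)=d_G(s_H,v)$ is $1$-Lipschitz in $d_G$. It also uses that any path leaving branch $a$ of $H$ must pass through $s_H$ or $t_H$, and so has length at least $g_H(v)$; this justifies the extension by zero. It then observes that the summands defining $\psi_{s,a}$ are nonnegative, with at most one nonzero at any vertex. Hence $\psi_{s,a}=\max_H\phi_{H,a}$, and a pointwise maximum of $1$-Lipschitz functions is $1$-Lipschitz.

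You instead use that $d_G$ is a path metric and check each coordinate edge by edge. You therefore only need two facts: $h_H$ changes by exactly the edge length along edges of $H$, and at the two endpoints of any edge at most one level-$s$ summand is nonzero. This avoids both the global identities $d_G(s_H,v)=h_H(v)$, $d_G(v,t_H)=L_H-h_H(v)$ and the max representation.

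The price is the combinatorial bookkeeping you flag, plus one fact you use without stating in the ``no level-$s$ copy'' case. An endpoint of such an edge cannot be an interior vertex of a level-$s$ copy, because every edge incident to a non-endpoint vertex of a copy belongs to that copy. This is the local form of the separation property the paper uses for paths, so nothing is missing, but it is worth saying explicitly. The paper's route gives the slightly stronger and more reusable statement that every $\phi_{H,a}$ is individually globally $1$-Lipschitz; yours is more elementary.
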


\begin{proof}[Proof.]
As discussed in Section~\ref{subsec:graphs}, both $h$ and $h_H$ are $1$-Lipschitz in $d_G$, and thus so is $g_H$ (as the minimum of $1$-Lipschitz functions). Extending $g_H$ by zero outside branch $a$ also preserves this bound: any path from that branch to a vertex outside it must pass through $s_H$ or $t_H$, and its length is therefore at least $g_H(v)$ when it starts at $v$. Thus $\phi_{H,a}$ is $1$-Lipschitz in $d_G$. It follows that, for each $s = 1, \ldots, r$, $\psi_{s, a}$ is also $1$-Lipschitz in $d_G$: this is because for any $v \in G$, at most one component in the sum \eqref{eq:aggregation} is active, i.e., nonzero, since no vertex is contained in two copies at the same level except at a shared source and target. Thus $\psi_{s, a}$ can be written as a maximum over $\phi_{H, a}$ for copies $H$ at level $s$, and the maximum preserves the Lipschitz constant.

\end{proof}

\subsection{Consecutive requests}\label{subsec:transition}

Because algorithms for convex body chasing can take advantage of convexity by choosing a weighted average of vertices, we must understand how distances between these weighted averages relate to distances in the original graph. To this end, we will first describe how two consecutive requests in the Section~\ref{subsec:requests} construction relate: in particular, which vertices can occur in these requests. By the inductive construction of requests, any change of vertices is localized to a call at one level and involves at most three vertices, while higher-level waiting vertices remain feasible. In particular, we have the following result.

\begin{lemma}[Structure of consecutive requests]\label[lemma]{lem:one-block}
Consider consecutive requests \(S,S'\). If $S \neq S'$, then there is a decomposition of these requests into a local request change $U, U'$ in a copy $H$ at some level $0 \leq s \leq r$, plus a set $Z$ of waiting vertices retained by the higher-level calls containing $H$, such that
\begin{equation} \label{eq:request_decomposition}
 S=U\cup Z,\qquad S'=U'\cup Z,\qquad
 (U\cup U')\cap Z=\varnothing.
\end{equation}
The local requests \(U,U'\) within $H$ have one of the following forms:
\begin{center}
\begin{tabular}{@{}lll@{}}
\toprule
Local event in \(H\) & \(U\) & \(U'\) \\
\midrule
Traverse a level-zero copy & \(\{s_H\}\) & \(\{t_H\}\) \\
Traverse the two entrance edges & \(\{s_H\}\) & \(\{u_L,u_R\}\) \\
Start Stage 3 & \(\{u,v\}\) & \(\{v\}\) \\
\bottomrule
\end{tabular}
\end{center}
Here \(u_L,u_R\) are the terminal endpoints of the entrance edges, and in the third row, \(u\) denotes the discarded branch's current
vertex and \(v\) is the surviving one.

On the other hand, if $S = S'$, the common request occurs at the beginning or end of a recursive call to a copy $H$. Let $u \in \{s_H, t_H\}$ be the corresponding endpoint that is requested, and let $Z$ be the waiting vertices retained by higher-level calls. Then we can take $U = U' = \{u\}$ for the decomposition \eqref{eq:request_decomposition}.

In any case, the set \(Z\) contains at most one waiting
vertex from each higher level.
\end{lemma}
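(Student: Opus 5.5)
We argue by induction on the level $r$, unrolling the recursive definition of the request sequence in Section~\ref{subsec:requests}. The level-$r$ sequence is a concatenation of several pieces: the two level-$r$ endpoint requests, the entrance request, the recursive calls in Stages 1--3, and the transition into Stage 3. We classify every pair of consecutive requests $S,S'$ according to where it falls in this concatenation. Each call to a copy $H'$ of $G_{r-1}$ with waiting vertex $z$ turns each local request $U$ of $H'$ into $U\cup\{z\}$, or into $U$ alone in Stage 3. So a consecutive pair inside a single call is, by the inductive hypothesis applied to the level-$(r-1)$ sequence in $H'$, of the form $U\cup Z_0$, $U'\cup Z_0$. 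Here $U,U'$ is a local change in some sub-copy $H\subseteq H'$ from the table, or the repeated-endpoint case. We set $Z=Z_0\cup\{z\}$. Then $Z$ gains exactly one vertex from level $r$, which preserves the "at most one waiting vertex per higher level" property.

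The remaining step for this case is disjointness, $(U\cup U')\cap Z=\varnothing$. The new waiting vertex $z$ lies on the opposite branch, and it is the current vertex there. In Stage 1 or 2 this is the source of a copy that has not yet been entered, or $u_L,u_R$ after the entrance edges. In every case $z\neq s_{r},t_{r}$, and $z$ is not a vertex of the called copy $H'$, because the two branches share only $s_r,t_r$. Since $U,U'\subseteq H\subseteq H'$, we get $z\notin U\cup U'$. Disjointness from $Z_0$ holds by induction.

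Next we handle consecutive pairs that straddle pieces of the level-$r$ sequence.
\begin{itemize}
\item \emph{Source to entrance.} The pair $\{s_r\}\to\{u_L,u_R\}$ is the second row of the table, with $H=G_r$ and $Z=\varnothing$ at this level.
\item \emph{Into the first call.} The entrance request $\{u_L,u_R\}$ is followed by the first call in the left small copy, which begins with $\{s_{H'},u_R\}=\{u_L,u_R\}$. This is a repeated request at the start of a call, so $U=U'=\{s_{H'}\}$ and $Z=\{u_R\}$.
\item \emph{Between two calls in Stages 1--2.} The call in one copy ends with $\{t_{H'},z\}$ and the next call begins with $\{s_{H''},z'\}$. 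Since the target of one copy is the source of the next on the same branch, or the branches swap roles but the current vertices are unchanged, these sets coincide. This is the $S=S'$ case, with $u$ an endpoint of the relevant copy and $Z$ the other current vertex together with the higher-level waiting vertices. At deeper recursion the endpoint request $\{s_H\}\cup Z$ may itself coincide with the surrounding requests; we simply take the innermost copy whose call begins or ends there.
\item \emph{End of Stage 2 into Stage 3.} The pair is $\{u,v\}\to\{v\}$, the third row of the table.
\item \emph{Stage 3 into its first call.} The request $\{v\}$ is followed by the call beginning at $\{s_{H'}\}=\{v\}$, which is again a repeated endpoint.
\item \emph{Final pair.} The last call ends with $\{t_{H'}\}=\{t_r\}$, followed by the final request $\{t_r\}$, which is also a repeated endpoint.
\end{itemize}

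The base case is $G_0$, whose sequence is just $\{s_0\}\to\{t_0\}$, the first row of the table with $Z=\varnothing$. The row "traverse a level-zero copy" at deeper levels arises precisely from this base case via the inductive step.

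The main obstacle I expect is bookkeeping rather than any conceptual difficulty. Two things need care: verifying that every boundary pair between consecutive pieces has actually been enumerated, and checking that in the $S=S'$ case the chosen $u$ and $Z$ satisfy disjointness. For the latter, $u$ is an endpoint of $H$, while the waiting vertices lie on other branches of strictly higher-level copies and hence are never endpoints of $H$. I would organize the argument as a single induction with the statement strengthened to cover all of these cases at once.
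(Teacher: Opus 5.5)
Your proposal is correct and is essentially the paper's argument. The paper descends top-down from the highest-level call, setting aside one waiting vertex per level until the change is a level-zero edge, an entrance pair, or a Stage 3 removal, and treats repeated requests as call boundaries; this is your induction on the level read in reverse, and you additionally verify the disjointness $(U\cup U')\cap Z=\varnothing$, which the paper leaves implicit. The one boundary you did not list is between consecutive calls within Stage 3, but it is again a repeated endpoint $\{t_{H'}\}\cup Z=\{s_{H''}\}\cup Z$ with no waiting vertex at that level, handled exactly like your other repeated-endpoint cases.
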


\begin{proof}[Proof.]
First, suppose $S \neq S'$. Start with the highest-level call in $G$. If both requests occur within a call to a smaller copy, set aside the waiting vertex on the opposite
branch for $Z$, if one is present, and repeat inside that copy. For distinct requests, this process stops at a base (level-$0$) edge, at the entrance edges of
a copy, or at the removal of a branch's current vertex in Stage 3, as these are the three possible changes which do not occur within a recursive call; $H$ is specified as the terminal copy.

On the other hand, if $S = S'$, repetition occurs at a boundary involving the first or last request of a call to a copy $H$: for instance, it can occur between the entrance edge and the first copy along a branch, or between two copies. We can take $U$ to be the corresponding endpoint of $H$, and $Z$ to be all higher-level waiting vertices in $S$.

In any case, following the argument in the proof of Lemma~\ref{lem:offline-metric}, there is at most one waiting vertex in $Z$ from each higher level.
\end{proof}

This structure allows us to obtain the following corollary characterizing the branch aggregates $\psi_{s, a}$ on consecutive requests.

\begin{corollary}\label[corollary]{cor:shared-coordinates}
  Let $S, S'$ be consecutive requests. For each level $s \in \{1, \ldots, r\}$, there exists a level-$s$ copy $H_s$ such that every other level-$s$ copy $H_s'$ has both branch functions equal to zero on $S \cup S'$---that is, such that $\phi_{H_s', a}(v) = 0$ for all $v \in S \cup S'$, $a \in \{L, R\}$, and level-$s$ copies $H_s' \neq H_s$. Consequently,
  $$\psi_{s, a}(v) = \phi_{H_s, a}(v)$$
  for every $v \in S \cup S'$ and $a \in \{L, R\}$.
\end{corollary}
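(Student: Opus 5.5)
We use the decomposition of Lemma~\ref{lem:one-block}: write $S = U\cup Z$, $S' = U'\cup Z$, where $U,U'$ lie in a copy $H$ at some level $s_0$ and $Z$ contains at most one waiting vertex from each level above $s_0$. Fix a level $s\in\{1,\ldots,r\}$. A level-$s$ branch function $\phi_{H',a}$ vanishes at every vertex outside $H'$, and also at the source and target of $H'$. So it suffices to show that all vertices of $S\cup S'$ lying strictly inside some level-$s$ copy (not at its endpoints) lie in one common copy $H_s$. If there are no such vertices, choose $H_s$ arbitrarily. The identity $\psi_{s,a}=\phi_{H_s,a}$ on $S\cup S'$ then follows directly from the definition \eqref{eq:aggregation}.

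We split the vertices into the local part $U\cup U'$ and the waiting part $Z$.

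\emph{Local part.} The vertices of $U\cup U'$ lie in the copy $H$ at level $s_0$.
\begin{itemize}
\item If $s\ge s_0$, there is exactly one level-$s$ copy containing $H$, namely the ancestor of $H$ at level $s$. Its interior contains every vertex of $H$ that is interior to any level-$s$ copy, since level-$s$ copies meet only at their endpoints.
\item If $s<s_0$, consider the three cases of the table in Lemma~\ref{lem:one-block}. The vertices $s_H,t_H,u_L,u_R$ and the Stage-3 current vertices $u,v$ are all sources or targets of level-$(s_0-1)$ copies: $u_L,u_R$ are sources of the first small copies, and current vertices are always copy endpoints. A source or target of a level-$(s_0-1)$ copy is also an endpoint of a level-$s$ copy for every $s<s_0$. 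This follows by induction, because the source and target of a copy are the source of its first and the target of its last subcopy. Hence every $\phi$ at level $s$ vanishes on these vertices.
\item In the level-zero case, $s_H,t_H$ are endpoints of every copy containing the edge, by the same induction.
\end{itemize}

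\emph{Waiting part.} Each $z\in Z$ is a waiting vertex created at some level $\ell>s_0$ during a call in a level-$(\ell-1)$ copy $H^{(\ell-1)}$ that contains $H$. The vertex $z$ is the current vertex on the opposite branch of the level-$\ell$ ancestor $H^{(\ell)}$, so it is an endpoint of a level-$(\ell-1)$ copy. By the induction above, $z$ is then an endpoint of every level-$s$ copy for $s\le \ell-1$, and all such $\phi$ vanish at $z$. For $s\ge\ell$, the vertex $z$ lies in $H^{(\ell)}$, and hence in the level-$s$ ancestor of $H$, which is the same copy identified in the local part.

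Combining the two parts, take $H_s$ to be the level-$s$ ancestor of $H$ when $s\ge s_0$. When $s<s_0$, every vertex of $S\cup S'$ is a level-$s$ endpoint, so any choice of $H_s$ works. The case $S=S'$ is covered by the same argument with $U=U'=\{u\}$ an endpoint of $H$.

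The main obstacle is the inductive fact that endpoints of a level-$k$ copy are endpoints of level-$s$ copies for all $s\le k$. It needs care because of the entrance edges: $s_{k+1}$ is not the source of any level-$k$ copy. Following the source of a level-$(k+1)$ copy, it is the common source of two entrance edges, and those edges are not inside any lower-level copy. So for $s\le k$ it lies in no level-$s$ copy at all, and every $\phi$ at level $s$ vanishes there anyway, which is all we need. I would therefore prove the weaker statement "every $\phi_{H',a}$ with $H'$ at level $s\le k$ vanishes at such vertices" rather than literal endpoint membership.
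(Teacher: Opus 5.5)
Your proposal is correct and follows essentially the same route as the paper's proof: decompose via Lemma~\ref{lem:one-block}, take $H_s$ to be the level-$s$ copy containing $H$ when $s$ is at least the level of $H$, and use that a waiting vertex retained at level $\ell$ is an endpoint of a level-$(\ell-1)$ copy (so its branch functions vanish at all levels below $\ell$, and otherwise it lies inside $H_s$). The slips are not load-bearing. Your separate level-zero bullet, claiming that $s_H,t_H$ are endpoints of every copy containing the edge, is false as stated but never needed, since $s\ge 1>0$ puts that case under your first bullet. Likewise, a level-$(k+1)$ source can lie in lower-level copies as the target of a preceding copy, but your weaker ``branch functions vanish'' formulation already covers this.
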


\begin{proof}[Proof.]
  Let $H$, $U$, $U'$, and $Z$ be the copy, local requests, and waiting vertices supplied by Lemma~\ref{lem:one-block}, respectively, and let $\ell \in \{0, \ldots, r\}$ be the level of $H$. Fix $s \in \{1, \ldots, r\}$.

  If $s \geq \ell$, choose $H_s$ to be the level-$s$ copy containing $H$. We must show that any other level-$s$ copy has branch functions equal to zero on any vertex $v \in S \cup S'$.
  First, note that, by construction, a waiting vertex retained at some level \(j\) is an endpoint of a
  level-\((j-1)\) copy, so all branch functions at levels below $j$ vanish at that vertex. As such, any waiting vertices in $Z$ which were retained above level $s$ have all level-$s$ branch functions equal to zero. Then there are two remaining kinds of vertices in the request. First, the local vertices satisfy $U \cup U' \subseteq H \subseteq H_s$; since distinct copies at the same level meet only at endpoints, every level-$s$ copy other than $H_s$ has branch functions equal to zero on these vertices. Second, any waiting vertex $z \in Z$ retained at a level $j \leq s$ is contained in the level-$j$ copy where it is retained; by the recursive construction, that copy contains $H$ and is itself contained in $H_s$, so $z \in H_s$, and we once again have that every level-$s$ copy other than $H_s$ has branch functions equal to zero on $z$.

  On the other hand, if $s < \ell$, the configurations of the local requests $U, U'$ detailed in Lemma~\ref{lem:one-block}---including the case of a repeated request, where $U = U'$--- imply that every local vertex in $U \cup U'$ is either outside of or at an endpoint of every level-$s$ copy, so all level-$s$ branch functions are zero on these vertices. The higher-level waiting vertices likewise have zero level-$s$ branch functions, by the same reasoning as in the previous case. Thus all these functions vanish on $S \cup S'$, and any level-$s$ copy can serve as $H_s$.

  The identity in the corollary statement then follows from the definition of $\psi_{s, a}$.
\end{proof}

Crucially, the copy $H_s$ designated in the above corollary is the \emph{same} for any vertex in either request. Thus, when considering consecutive requests, we can simply restrict our focus to the single copy that realizes the branch aggregate $\psi_{s, a}$ (and any change thereof) between requests.

We will also need to bound the distance between any two vertices that co-occur in the same request in a copy in terms of the branch function.

\begin{lemma}[Distance between co-occurring vertices]\label[lemma]{lem:stationary}
Let $u, v$ be vertices that lie on opposite branches of a copy $H$ and co-occur in a request in a call to $H$. Then
\begin{equation}
 d_G(u, v)\le4g_H(u).
 \label{eq:stationary}
\end{equation}
\end{lemma}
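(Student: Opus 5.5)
The plan is to reduce \eqref{eq:stationary} to a comparison of local depths, and then check that comparison stage by stage using the explicit lengths of $G_{s}$'s components. Let $H$ be a copy of $G_{s}$ with local length $L_H$. In the scale of $H$, the children are copies of $G_{s-1}$ with parameters $m=m_{s-1}$ and $b=b_{s-1}$, so $L_H=3m+b$ (after rescaling all quantities by the factor used for $H$, which does not affect the claimed inequality). By \eqref{eq:copy-distance-bound} applied to $H$,
\[
 d_G(u,v)\le \min\{h_H(u)+h_H(v),\;(L_H-h_H(u))+(L_H-h_H(v))\}.
\]
So it suffices to prove two implications:
\begin{itemize}
 \item if $g_H(u)=h_H(u)$, then $h_H(v)\le 3h_H(u)$;
 \item if $g_H(u)=L_H-h_H(u)$, then $L_H-h_H(v)\le 3(L_H-h_H(u))$.
\end{itemize}
In either case the corresponding term of the minimum is at most $4g_H(u)$. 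I will aim for the stronger constant $2$ in place of $3$, which leaves slack.

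Next I identify which pairs on opposite branches can co-occur. Within a call to $H$, a request containing vertices from both branches of $H$ arises in one of three ways.
\begin{enumerate}
 \item It is the request $\{u_L,u_R\}$ after the entrance edges. Both vertices then have local depth $b$, so the claim is immediate.
 \item It occurs during a Stage 1 or Stage 2 call into a child copy $H'$ on one branch. Then one vertex is the waiting vertex $z$, the current vertex of the opposite branch. The other is a vertex of some request inside $H'$, possibly several levels deeper, but in any case a vertex of $H'$. Its local depth therefore lies between $h_H(s_{H'})$ and $h_H(t_{H'})$.
 \item It is the final request of Stage 2, consisting of the two current vertices.
\end{enumerate}
Stage 3 and the source and target requests involve only one branch, or only shared endpoints, which lie on neither branch, so they can be ignored. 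I would state this enumeration carefully via the recursive definition: the only vertices of $H$'s branches that appear in requests during a call to $H$ are current vertices, and vertices inside the child copy currently being called.

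Then I do the depth bookkeeping. Every branch vertex has local depth at least $b\ge 1/m$, and at most $b+3m$.

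In Stage 1, the two current depths differ by at most $1/m$, and a vertex inside the small copy being advanced lies within $1/m$ of both ends of that copy. Hence $|h_H(u)-h_H(v)|\le 1/m\le b\le h_H(u)$, which gives $h_H(v)\le 2h_H(u)$. On the target side, both vertices have depth at most $b+m$, so $L_H-h_H(\cdot)\ge 2m$. The difference of at most $1/m$ then gives the required ratio bound trivially.

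In Stage 2, every relevant vertex has local depth in $[b+m,\,b+2m]$, since each branch advances through at most $m$ large copies. Consequently $h_H(v)\le b+2m\le 2(b+m)\le 2h_H(u)$. Similarly, $L_H-h_H(v)\le 2m\le 2(L_H-h_H(u))$, using $L_H-h_H(u)\ge m$.

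The main obstacle I expect is not this arithmetic but making the enumeration in the previous paragraph rigorous. Specifically, I must argue that a deeply nested request vertex inside a child copy $H'$ still has local depth in $H$ within the range $[h_H(s_{H'}),h_H(t_{H'})]$. I must also argue that no waiting vertex retained at a higher level than $H$ lies on a branch of $H$. Both facts follow from the inductive definition of calls and the fact that depth is monotone along directed paths. I would handle them by a short induction on the nesting, in the same style as the proof of Lemma~\ref{lem:one-block}.
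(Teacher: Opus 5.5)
Your proof is correct and follows essentially the same route as the paper. Both arguments do a stage-by-stage depth analysis: in the entrance pair and Stage 1, opposite-branch depths differ by at most $1/m\le b\le h_H(u)$; in Stage 2 and at the start of Stage 3, all relevant depths lie in $[b+m,b+2m]$; each case is then combined with \eqref{eq:copy-distance-bound}. The only minor difference is in Stage 2, where the paper uses the cruder bound $d_G(u,v)\le L_H\le 4m\le 4g_H(u)$, while your two-sided depth comparison gives constant $3$ throughout; your explicit enumeration of which opposite-branch pairs can co-occur is also more careful than the paper's, which leaves it implicit.
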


\begin{proof}[Proof.]
Scale \(H\) so that the large copies constituting it have length $1$ (as in the construction in Section~\ref{subsec:graphs}), and write $m, b$ for its construction parameters.
Recall that $L_H = 3m + b$ and $1/m\le b\le1\le m$.

Let \(\delta=1/m\) and \(z=h_H(u)\). In the entrance-pair request and throughout Stage 1, the construction
ensures that \(h_H(v)\le z+\delta\), while
\(\delta\le b\le z\le b+m\le L_H/2\). Thus \(g_H(u)=z\),
and the path through \(s_H\) gives
\[
 d_G(u,v)\le h_H(u)+h_H(v)\le2z+\delta\le3g_H(u).
\]
In Stage 2 and at the start of Stage 3 (before the more advanced vertex is removed), the relevant vertices have
depths in \([b+m,b+2m]\). In particular, \(g_H(u)\ge m\),
whereas \eqref{eq:copy-distance-bound} gives
\(d_G(u,v)\le L_H\le4m\), giving \eqref{eq:stationary}. Rescaling preserves these bounds.
\end{proof}

Having established these structural properties of the discrete requests and the branch functions, we next turn to relating the costs of randomized strategies in the discrete instance and corresponding weighted averages in the mapped instance.

\subsection{Movement between weighted averages}\label{sec:transport}

Let \(P,Q\) be probability distributions on the vertices of \(G\).
A \emph{coupling} \(\pi\) of \(P,Q\) assigns a nonnegative mass
\(\pi(u,v)\) to move from each vertex \(u\) to each vertex \(v\), satisfying the marginal constraints
\(\sum_v\pi(u,v)=P(u)\) and \(\sum_u\pi(u,v)=Q(v)\).
Its cost is \(\sum_{u,v}\pi(u,v)d_G(u,v)\); let \(W_1(P,Q)\)
denote the minimum cost over all such couplings, i.e., the Wasserstein-1 distance between $P$ and $Q$. We use the same definition for any two nonnegative finite measures with equal total mass.

We now show that,
for distributions on consecutive requests, this cost is controlled
by the $\ell_1$ movement of their weighted averages under the map $\Psi$.

\begin{lemma}[Transportation cost is controlled by weighted average movement]\label[lemma]{lem:transport}
Let \(S,S'\) be consecutive requests in $G$, including possible repetitions.
For probability distributions \(P\) on \(S\) and \(Q\) on \(S'\),
with masses extended by zero elsewhere,
\begin{equation}
 W_1(P,Q)\le8\left\|\sum_{v\in G}(Q(v)-P(v))\Psi(v)\right\|_1.
 \label{eq:transport}
\end{equation}
\end{lemma}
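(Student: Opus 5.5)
The plan is to build an explicit coupling level by level, using the decomposition of Lemma~\ref{lem:one-block} and the single-copy description of the coordinates from Corollary~\ref{cor:shared-coordinates}. Write $S=U\cup Z$ and $S'=U'\cup Z$, with $Z=\{z_j\}$ containing at most one waiting vertex for each level $j$ above the level of $H$. Let $H_j$ be the level-$j$ copy in which $z_j$ is retained, and let $a_j$ be the branch of $H_j$ containing $z_j$. Every other vertex of $S\cup S'$ that matters at this level, namely $U\cup U'$ and the waiting vertices $z_i$ with $i<j$, lies in the sub-copy on the opposite branch of $H_j$. Waiting vertices at levels above $j$ are endpoints of level-$j$ copies.

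\textbf{Step 1: isolate the waiting-vertex mass changes.} Put $\delta_j=Q(z_j)-P(z_j)$ and $\Delta=\sum_v (Q(v)-P(v))\Psi(v)$. By Corollary~\ref{cor:shared-coordinates}, the coordinate $\psi_{j,a_j}$ on $S\cup S'$ equals $\phi_{H_j,a_j}$. This function vanishes on every vertex except $z_j$, since the other vertices are on the other branch of $H_j$ or are endpoints. Hence
\[
|\Delta_{j,a_j}|=|\delta_j|\,g_{H_j}(z_j),
\]
so each coordinate exactly certifies one waiting-vertex mass change.

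\textbf{Step 2: route that mass.} Process the levels from the top down. At level $j$, move $|\delta_j|$ units of mass between $z_j$ and the part of the measure strictly inside the opposite sub-copy. Any vertex $w$ there co-occurs with $z_j$ in a request of a call to $H_j$, so Lemma~\ref{lem:stationary} gives $d_G(z_j,w)\le 4g_{H_j}(z_j)$. By the triangle inequality for $W_1$, applied through intermediate measures of equal mass, this phase costs at most $4\sum_j|\delta_j|g_{H_j}(z_j)=4\sum_j|\Delta_{j,a_j}|$. Afterwards the remaining discrepancy is a local one between $P|_U$ and $Q|_{U'}$, suitably adjusted, and these have equal mass $\alpha$.

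\textbf{Step 3: the local cases.}
\begin{itemize}
\item Level-zero edge: cost $\alpha\cdot\mathrm{len}$.
\item Entrance pair: cost $\alpha b$.
\item Start of Stage~3: cost at most $P(u)\cdot 4g_H(u)$, by Lemma~\ref{lem:stationary}.
\item Repeated request: cost $0$.
\end{itemize}
In the third case the branch coordinate of $u$ at the level of $H$ sees only $u$, so it equals $P(u)g_H(u)$. In the first two cases I would use the depth coordinate $h$. The local move changes $\sum h$ by exactly the local cost, but the depth sum is polluted by the waiting-vertex shifts. Since $h$ is $1$-Lipschitz (Lemma~\ref{lem:expansion}), the pollution is at most $\sum_j|\delta_j|\,d(z_j,w)\le 4\sum_j|\Delta_{j,a_j}|$. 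Hence the local cost is at most $|\Delta_h|+4\sum_j|\Delta_{j,a_j}|$.

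Summing the two phases gives $W_1(P,Q)\le |\Delta_h|+8\sum_j|\Delta_{j,a_j}|$ in the first two cases, and a similar bound in the third. Since the coordinates involved are distinct, this is at most $8\|\Delta\|_1$.

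\textbf{Main obstacle.} The hard step is justifying the Step~2 bookkeeping. I need to show that the mass moved at level $j$ can always be taken from, or delivered to, vertices co-occurring with $z_j$. I also need to check that the intermediate measures stay nonnegative, or else argue via signed transport. The argument must further confirm that no other vertex contributes to $\psi_{j,a_j}$, including in the edge cases where $U$ contains endpoints of $H_j$'s sub-copies.
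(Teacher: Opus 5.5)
Your outline matches the paper's proof. You isolate each waiting vertex's mass change with its own branch coordinate (Corollary~\ref{cor:shared-coordinates}) and pay for moving it with Lemma~\ref{lem:stationary}. You then charge the residual local change either to the depth coordinate, which is polluted by at most the transfer cost since $h$ is $1$-Lipschitz, or, at the start of Stage~3, to the removed vertex's branch coordinate. The gap is the Step~2 bookkeeping you flag yourself, and as written it does not yet give the bound. There are two problems. First, routing $|\delta_j|$ units ``into the opposite sub-copy'' in top-down order may send mass through lower-level waiting vertices $z_i$. The transfer at $z_i$ is then no longer $|\delta_i|$, so it is not certified by $|\Delta_{i,a_i}|=|\delta_i|\,g_{H_i}(z_i)$. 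Second, your local costs ``$\alpha\cdot\mathrm{len}$'' and ``$\alpha b$'' presuppose that the ``suitably adjusted'' residual sends a nonnegative measure on $U$ to a nonnegative measure on $U'$. Whether that holds depends on where the peeled mass is placed.

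The missing idea is to drop both the ordering and nonnegativity. Work with the signed measure $\mu=Q-P$ and the optimal transport cost $\mathcal{T}(\eta)$ of zero-mass signed measures, which is subadditive. Send every $\mu(z_j)=\delta_j$ directly to one designated local vertex $v\in U\cup U'$. Writing $e_u$ for the unit mass at $u$, set $\nu=\mu-\sum_j\delta_j(e_{z_j}-e_v)$, so that $W_1(P,Q)\le\sum_j|\delta_j|\,d_G(z_j,v)+\mathcal{T}(\nu)$. Since $z_j\in S\cap S'$ and $v\in S\cup S'$, they co-occur in a request of the call to $H_j$ on opposite branches. Lemma~\ref{lem:stationary} therefore bounds each transfer by $4|\Delta_{j,a_j}|$, for all $j$ simultaneously.

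The choice of $v$ must match the local case:
\begin{itemize}
\item Entrance edges: take $v=s_H$. Then $\nu(u_L)=Q(u_L)\ge0$ and $\nu(u_R)=Q(u_R)\ge0$, and the residual flows forward along the two edges at cost exactly $\nu(h)$. So your depth-based bound is valid there; the paper instead charges $|\Delta_{s,L}|+|\Delta_{s,R}|$, the two branch coordinates of $H$.
\item Stage~3: take the surviving vertex, so $\nu=P(u)(e_v-e_u)$.
\item Level-zero edge: either endpoint works.
\end{itemize}
The choice matters. With, say, $v=u_L$ in the entrance case, the residual can require moving mass from $u_L$ to $u_R$. That costs $2b$ per unit but leaves $h$ unchanged, so the local cost is no longer controlled by the depth change.
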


\begin{proof}[Proof.]
Let \(\mu=Q-P\) be the signed, discrete measure denoting the change in masses, and for a scalar
function $f : G \to \R$, write \(\mu(f)=\sum_u\mu(u)f(u)\). Negative
entries of \(\mu\) specify mass to remove from a given vertex, and positive entries
specify mass to supply. For any signed, discrete measure $\eta$ of total mass $\eta(1) = 0$, let
\(\TC(\eta)\) be the minimum cost of moving its
negative masses to its positive masses; clearly $\TC(\eta) = W_1(\eta_+, \eta_-)$, where $\eta_+$ and $\eta_-$ are the positive and negative parts of $\eta$, respectively. Thus
\(W_1(P,Q)=\TC(\mu)\). It is easily verified that $\TC$ satisfies the triangle inequality: for signed, discrete measures $\eta, \eta'$ with total mass $0$, \(\TC(\eta+\eta')\le\TC(\eta)+\TC(\eta')\).

Using \cref{lem:one-block}, write \(S\cup S'=E\cup Z\), where
\(E=U\cup U'\) contains at most three local vertices and \(Z\)
consists of waiting vertices. We will account for the cost of the mass change at each waiting vertex $z \in Z$ by a transfer between it and one vertex $v \in E$, which we will choose later.
After these transfers, only mass changes within the local set \(E\) remain to be handled.

Supposing that $v \in E$ has been fixed, and writing \(\delta_u\) for a unit mass at \(u\), define the remaining signed measure \(\nu\) and the transportation cost \(B\) of the transfers between $Z$ and $v$ by
\begin{equation}
 \nu=\mu-\sum_{z\in Z}\mu(z)(\delta_z-\delta_v),
 \qquad
 B=\sum_{z\in Z}|\mu(z)|d_G(z,v).
 \label{eq:peel}
\end{equation}
For each $z \in Z$, subtracting $\mu(z)(\delta_z - \delta_v)$ sets the (signed) mass at $z$ to zero and adds $\mu(z)$ to the (signed) mass at $v$.
Thus \(\nu\) is zero on \(Z\),
agrees with \(\mu\) on \(E\setminus\{v\}\), and has total
mass zero. By the triangle inequality for \(\TC\), the original
transportation cost is at most \(B+\TC(\nu)\). Since the global depth \(h\) is
\(1\)-Lipschitz in graph distance, these transfers also change the weighted depth by
at most \(B\): we have
$$|\nu(h) - \mu(h)| = \left|\sum_{z \in Z} \mu(z)(h(v) - h(z))\right| \leq \sum_{z \in Z}|\mu(z)||h(v) - h(z)| \leq \sum_{z \in Z}|\mu(z)|d_G(z, v) = B.$$
Thus
\begin{equation}
 W_1(P,Q)\le B+\TC(\nu),
 \qquad
 |\nu(h)|\le|\mu(h)|+B.
 \label{eq:depth-error}
\end{equation}

To bound the contribution of each $z$ to $B$, let \(H_z\) be the
copy whose call retains \(z\), let $s_z$ be its level, and let $a_z \in \{L, R\}$ denote its branch containing $z$. Set $q_z \coloneqq \psi_{s_z, a_z}$ to be the corresponding aggregate branch function, which is the $(2s_z + \indic_{a_z = R})$th coordinate of $\Psi$.
This coordinate records only the mass at \(z\) on the two
requests. Indeed, consider the individual branch function $\phi_{H_z, a_z}$. At $z$, it equals $g_{H_z}(z) > 0$, since $z$ (being a waiting vertex) is neither the source nor the target of $H_z$. On the other hand, it vanishes at every other vertex of $S \cup S'$: the local vertices and waiting vertices retained below level $s_z$ lie on the branch of $H_z$ opposite $z$, while waiting vertices above level $s_z$ have zero branch functions at this level. Because $\phi_{H_z, a_z}$ is nonzero at $z$,
\cref{cor:shared-coordinates} implies that no other level-$s_z$ copy contributes to $q_z$ on any vertex in \(S\cup S'\). Thus $q_z = \phi_{H_z, a_z}$ on $S \cup S'$, giving
\begin{equation}
 \mu(q_z)=\mu(z)g_{H_z}(z).
 \label{eq:isolate}
\end{equation}
Since \(z\) belongs to both requests, it co-occurs with every
\(v\in E\) in \(S\) or \(S'\), and every such $v$ lies on the opposite branch of
\(H_z\). Thus \cref{lem:stationary} bounds the transfer cost by
\( |\mu(z)|d_G(z,v)\le4|\mu(q_z)| \). Summing over $Z$ gives
\begin{equation}
 B\le4\sum_{z\in Z}|\mu(q_z)|.
 \label{eq:B}
\end{equation}

It remains to select the designated transfer source/recipient $v \in E$ and bound \(\TC(\nu)\); we consider four cases based on the local configuration of $E$ (specifically, the four cases in Lemma~\ref{lem:one-block}).

\paragraph{Case 1: the local requests traverse a level-zero copy.}
Suppose we are in the first case of the table in Lemma~\ref{lem:one-block}, and let $s_H$ and $t_H$ denote the source and target of the level-zero copy/edge. Choose either endpoint as \(v\). Because $\nu$ is zero on $Z$, we have $\nu(s_H) + \nu(t_H) = 0$. The edge length is exactly the depth difference, so $\TC(\nu) = |\nu(s_H)|d_G(s_H, t_H) = |\nu(h)|$. Thus
\(W_1(P,Q)\le|\mu(h)|+2B\).

\paragraph{Case 2: the local requests traverse the entrance edges.}
Suppose we are in the second case of the table in Lemma~\ref{lem:one-block}, and that the local change occurs in a level-\(s\) copy \(H\)
whose entrance edges have length \(b_H\). Choose \(v=s_H\).
The residual masses at the terminal vertices \(u_L,u_R\) of the entrance edges are still \(\mu(u_L),\mu(u_R)\), and they can be supplied by transporting mass from $s_H$ along the two entrance edges at cost at most
\begin{equation}
 \TC(\nu)\le b_H\bigl(|\mu(u_L)|+|\mu(u_R)|\bigr)
 =|\mu(\psi_{s,L})|+|\mu(\psi_{s,R})|.
 \label{eq:split}
\end{equation}
To see the last equality, note that, by the definition of the aggregate branch function and the reasoning in the proof of Corollary~\ref{cor:shared-coordinates}, for each branch $a \in \{L, R\}$ we have $\psi_{s, a}(u_a) = g_H(u_a) = b_H$, while $\psi_{s, a}$ vanishes at every other vertex in \(S\cup S'\).

\paragraph{Case 3: a removal at the start of Stage 3.}
Suppose we are in the third case of the table in Lemma~\ref{lem:one-block}. Choose the surviving vertex as \(v\), and let \(u\) be the
removed vertex. If the copy $H$ in which the removal occurred has level $s$ and
\(u\) lies on branch \(a\), then
\begin{equation}
 \TC(\nu)=|\mu(u)|d_G(u,v)
 \le4|\mu(u)|g_H(u)=4|\mu(\psi_{s,a})|,
 \label{eq:delete}
\end{equation}
by \cref{lem:stationary}. The last equality follows because $\psi_{s, a}(u) = g_H(u)$, while \(\psi_{s,a}\) vanishes at the surviving vertex (which is on the opposite branch from $u$) and at all vertices in \(Z\).

\paragraph{Case 4: a repeated request.}
Suppose we are in the last case (not in the table) in Lemma~\ref{lem:one-block}, where a request is repeated. Here $E$ is a singleton; let $v$ be its unique vertex. Since $\nu$ is supported on $\{v\}$ and has total mass zero, \(\nu=0\), and thus \(W_1(P,Q)\le B\).

\medskip

By \cref{lem:one-block}, there is at most one waiting vertex
at each level, so the \(q_z\) correspond to distinct coordinates of
\(\Psi\) for each $z \in Z$. Thus \eqref{eq:B} gives
\[
 B\le4\sum_{s=1}^r\sum_{a\in\{L,R\}}
 |\mu(\psi_{s,a})|.
\]
In the level-zero edge case, we have
\(W_1(P,Q)\le|\mu(h)|+2B\).
In the other three cases, the bounds above give
\[
 \TC(\nu)\le4\sum_{s=1}^r\sum_{a\in\{L,R\}}
 |\mu(\psi_{s,a})|.
\]
Together with \(W_1(P,Q)\le B+\TC(\nu)\), these bounds imply
that, in every case,
\[
 W_1(P,Q)
 \le |\mu(h)|+
 8\sum_{s=1}^r\sum_{a\in\{L,R\}}|\mu(\psi_{s,a})|
 \le8\left\|\sum_u\mu(u)\Psi(u)\right\|_1,
\]
where the last inequality follows from the definition of
the coordinates of \(\Psi\).
\end{proof}

\section{The family of convex body chasing instances}\label{sec:convexification}

We will now describe and analyze the construction giving the convex body chasing lower bound. We will first analyze the construction in the case where movement costs are $\ell_p$ (Proposition~\ref{prop:body}); then, we will use the Euclidean case ($p = 2$) to establish the result for an arbitrary norm.

Fix $p \in [1, \infty]$. For each discrete request \(S_t\) in the metrical service systems instance, define
\begin{equation}
 K_t=\conv\{\Psi(v):v\in S_t\}\subseteq\R^D,
 \label{eq:convexification}
\end{equation}
with initial point \(\mathbf{x}_0=\Psi(s_r)=\mathbf{0}\).
These sets are nonempty and compact. The graph, the map, and the
random choices generating the requests are independent of any
algorithm, so this defines an oblivious distribution of convex body chasing
instances. In what follows, it suffices to restrict our focus to deterministic algorithms, since any randomized algorithm for convex body chasing can be derandomized by taking the expectation while preserving feasibility and without increasing movement cost \cite{bansal_2-competitive_2015}.

We will compare the online and offline costs on these instances using equivalence of norms: for $\mathbf{u}\in\R^D$,
\begin{equation}
 \norm{\mathbf{u}}_p\le D^{1/p}\norm{\mathbf{u}}_\infty,
 \qquad
 \norm{\mathbf{u}}_1\le D^{1-1/p}\norm{\mathbf{u}}_p.
 \label{eq:norm-comparisons}
\end{equation}
The first inequality will be used to bound the cost of the mapped offline sequence. The second, together with \cref{lem:transport}, lets us construct a randomized online algorithm for the discrete requests whose expected movement cost is at most $8D^{1-1/p}$ times the convex body chasing algorithm's expected movement cost. In particular, we have the following result.

\begin{proposition}\label[proposition]{prop:body}
Let movement costs be $\ell_p$. There is a universal constant \(c_0>0\) such that, for \(r\ge1\), any convex body chasing algorithm $\alg$ run on the request sequence in dimension $D = 2r + 1$ satisfies
$$\E[\alg] \geq \frac{c_0r^2\Lambda}{8D^{1-1/p}},$$
where $\Lambda = d_G(s_r,t_r)$ is the source-to-target distance of the graph $G$ used to construct the instance, and algorithm movement cost is measured in the $\ell_p$ norm. The expectation is over the request sequence drawn from
\(\mathcal D_r\), which gives a correspondingly random sequence of convex body requests.

On the other hand, every realized sequence of bodies satisfies
\[
 \opt \le D^{1/p}\Lambda.
\]
\end{proposition}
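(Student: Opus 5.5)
The plan is to prove the two bounds separately; the offline bound is immediate from \cref{lem:offline-metric} and \cref{lem:expansion}. For every realized request sequence, \cref{lem:offline-metric} gives a directed source-to-target path $s_r = v_0, v_1, \ldots, v_T = t_r$ of total length $\Lambda$ with $v_t \in S_t$. The mapped points $\Psi(v_t) \in K_t$ form a feasible convex body chasing trajectory starting at $\mathbf{x}_0 = \Psi(s_r) = \mathbf{0}$. By \eqref{eq:expansion} and the first inequality in \eqref{eq:norm-comparisons}, $\|\Psi(v_t) - \Psi(v_{t-1})\|_p \le D^{1/p} d_G(v_{t-1}, v_t)$. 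Summing over $t$ gives $\opt \le D^{1/p}\Lambda$.

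For the online bound, we reduce to \cref{lem:mss-lower}. Since randomized algorithms can be derandomized, fix a deterministic convex body chasing algorithm $\alg$. Given a realized prefix of discrete requests $S_1, \ldots, S_t$, $\alg$ sees $K_1, \ldots, K_t$ and outputs $\mathbf{x}_t \in K_t$. We pick convex weights $P_t$ on $S_t$ with $\sum_{v} P_t(v)\Psi(v) = \mathbf{x}_t$, choosing them by a fixed measurable rule from $\mathbf{x}_t$ and $S_t$ alone so that this remains an online rule; $P_0 = \delta_{s_r}$. We then define a randomized online algorithm for the metrical service system that plays the distribution $P_t$ at time $t$, coupling consecutive distributions optimally. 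Its expected cost on the instance is $\sum_t W_1(P_{t-1}, P_t)$.

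Next we bound this cost step by step. Consecutive requests $S_{t-1}, S_t$ qualify for \cref{lem:transport}, including repeated requests. For the first step, $S_0 = \{s_r\}$ is the start request, and we treat it by prepending it as a repeated request, so that case is covered as well. \cref{lem:transport} together with the second inequality in \eqref{eq:norm-comparisons} gives
\[
W_1(P_{t-1},P_t) \le 8\|\mathbf{x}_t - \mathbf{x}_{t-1}\|_1 \le 8D^{1-1/p}\|\mathbf{x}_t-\mathbf{x}_{t-1}\|_p.
\]
Summing over $t$ and taking expectations over $\sigma \sim \mathcal D_r$ yields $\E[\text{discrete cost}] \le 8D^{1-1/p}\,\E[\alg]$. \Cref{lem:mss-lower}, in the explicit form $\E[\cdot] \ge c_0 r^2 \Lambda$, bounds the left side from below, and rearranging gives the claim.

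The main point to check is that the induced discrete strategy is a legitimate online randomized algorithm. This requires that $P_t$ depends only on requests revealed so far and is supported on $S_t$. It holds because $\mathbf{x}_t$ is a function of $K_1, \ldots, K_t$, which in turn are functions of $S_1, \ldots, S_t$, and $K_t$ is the convex hull of $\Psi(S_t)$. The convex weights need not be unique, but \cref{lem:transport} holds for any choice, so an arbitrary deterministic selection suffices.
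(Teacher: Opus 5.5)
Your proposal is correct and follows essentially the same route as the paper. You choose convex weights $P_t$ representing $\mathbf{x}_t$, couple consecutive weights optimally to obtain a randomized discrete algorithm, bound each step via \cref{lem:transport} and $\|\cdot\|_1 \le D^{1-1/p}\|\cdot\|_p$, invoke \cref{lem:mss-lower}, and bound $\opt$ by mapping the offline directed path through $\Psi$ using \cref{lem:expansion}. The only cosmetic difference is the first step: instead of prepending a fictitious repeated request, the paper simply observes that $S_1=\{s_r\}$ forces $P_1=\delta_{s_r}=P_0$, so the initial movement is zero.
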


\begin{proof}[Proof.]
Let $\alg$ be an arbitrary (deterministic) algorithm and let \(\mathbf{x}_t\in K_t\) be its decisions at each time.
Choose a probability distribution \(P_t\) on \(S_t\) representing
each point:
\[
 \mathbf{x}_t=\sum_{v\in S_t}P_t(v)\Psi(v).
\]
Such weights exist by the definition of a convex hull; if there are multiple feasible choices of $P_t$, break ties using a fixed rule.

We will turn these distributions into an algorithm $\alg_{\text{discrete}}$ for the discrete request sequence in $G$. Set \(P_0=\delta_{s_r}\). At each time
\(t\), choose a minimum-cost coupling \(\pi_t\) of
\(P_{t-1},P_t\), using a deterministic rule to resolve ties.
If the previously selected vertex is \(u\), at time $t$ select \(v\)
with probability \(\pi_t(u,v)/P_{t-1}(u)\); vertices of
zero \(P_{t-1}\)-mass are never selected. These choices
use only the requests revealed so far, so this defines
a randomized online algorithm for the discrete requests.

For each fixed input, induction shows that consecutive
selected vertices have joint distribution \(\pi_t\).
Their expected movement is therefore \(W_1(P_{t-1},P_t)\).
The first request is the source singleton, so the initial
movement is zero. At each later step, \cref{lem:transport}
and \eqref{eq:norm-comparisons} give
\[
 W_1(P_{t-1},P_t)
 \le8\norm{\mathbf{x}_t-\mathbf{x}_{t-1}}_1
 \le8D^{1-1/p}\norm{\mathbf{x}_t-\mathbf{x}_{t-1}}_p.
\]
Summing over time, averaging over the distribution on request sequences, and applying the distributional lower bound from \cref{lem:mss-lower} to the cost of the discrete algorithm we have specified yields
\[
 c_0r^2\Lambda
 \le\E[\alg_{\mathrm{discrete}}]
 \le8D^{1-1/p}\E[\alg],
\]
for a
universal \(c_0>0\).

For the offline comparison, \cref{lem:offline-metric} gives a
feasible directed path of length \(\Lambda\) in every realized
discrete instance. By \cref{lem:expansion} and
\eqref{eq:norm-comparisons},
\[
 \norm{\Psi(u)-\Psi(v)}_p\le D^{1/p}d_G(u,v).
\]
Summing over time, the mapped offline choices in the discrete instance therefore serve the convex requests at
total cost at most \(D^{1/p}\Lambda\), giving the upper bound on $\opt$.
\end{proof}

The online and offline comparisons together lose a combined factor of
\(8D\), independent of \(p\). Since \(D=2r+1\), the quadratic
discrete-set lower bound gives the desired linear lower bound for convex body chasing with $\ell_p$ movement costs. To extend the result to arbitrary norms, we will use the following consequence of Milman's quotient-of-a-subspace theorem \cite{milman_1985} (see also \cite[Section 2.4]{Giannopoulos2004}).

\begin{lemma}[Milman~\cite{milman_1985}]
\label[lemma]{lem:milman}
There is a universal constant \(M\ge1\) such that, for every
\(d\)-dimensional normed vector space \(X\), there is a subspace
\(E\subseteq X\), an integer \(k\ge d/2\), and a linear map
\(Q:E\to\R^k\) satisfying
\[
  B_2^k\subseteq Q(B_E)\subseteq M \cdot B_2^k,
\]
where \(B_E=\{\mathbf{x}\in E:\norm{\mathbf{x}}_X\le1\}\)
is the unit ball of \(E\), and \(B_2^k\) is the Euclidean
unit ball in $\R^k$.
\end{lemma}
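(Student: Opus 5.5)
The plan is to obtain the lemma directly from Milman's quotient-of-a-subspace (QS) theorem \cite{milman_1985} (see \cite[Section 2.4]{Giannopoulos2004}), in the following form. For every \(\lambda\in(0,1)\) there is a constant \(C(\lambda)\) with the following property: every \(d\)-dimensional normed space \(X\) has subspaces \(F\subseteq E\subseteq X\) such that \(k\coloneqq\dim(E/F)\ge\lambda d\), and the quotient space \(E/F\), with its quotient norm, has Banach--Mazur distance at most \(C(\lambda)\) from \(\ell_2^k\). I will apply this with \(\lambda=1/2\) and set \(M\coloneqq C(1/2)\). This gives \(k\ge d/2\), and \(M\) is universal because \(C(\lambda)\) depends only on \(\lambda\), not on \(d\) or \(X\).

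The first step is to identify the quotient unit ball concretely. Let \(\pi:E\to E/F\) be the canonical projection, and equip \(E/F\) with the quotient norm \(\norm{\pi(\mathbf{x})}=\inf_{\mathbf{y}\in F}\norm{\mathbf{x}+\mathbf{y}}_X\). Since \(F\) is finite-dimensional, hence closed, this infimum is attained: for fixed \(\mathbf{x}\), the minimization can be restricted to a compact set of \(\mathbf{y}\). Therefore the closed unit ball of \(E/F\) is exactly \(\pi(B_E)\).

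The second step uses the Banach--Mazur bound. It gives a linear isomorphism \(T:E/F\to\R^k\) with \(\alpha B_2^k\subseteq T(\pi(B_E))\subseteq \alpha M B_2^k\) for some \(\alpha>0\). Replacing \(T\) by \(\alpha^{-1}T\) normalizes this to \(B_2^k\subseteq T(\pi(B_E))\subseteq MB_2^k\). Setting \(Q\coloneqq T\circ\pi:E\to\R^k\), the map \(Q\) is linear (and surjective), and \(Q(B_E)=T(\pi(B_E))\) satisfies both required inclusions. If one insists on \(M\ge1\), note this holds automatically, since \(B_2^k\subseteq MB_2^k\) forces \(M\ge1\).

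I expect no real obstacle beyond quoting the QS theorem in the right form. The main points to check are small. First, for small \(d\) (e.g.\ \(d=1\)) the statement is trivial: take \(E=X\) and \(k=d\), since every one-dimensional normed space is isometric to \(\R\) up to scaling, so the constant is unaffected. Second, the rounding in \(k\ge\lceil d/2\rceil\) is harmless. Third, the QS theorem is sometimes stated with the quotient taken first and then a subspace; the two formulations are equivalent by duality, since a subspace of a quotient corresponds to a quotient of a subspace of the dual, and Euclidean spaces are self-dual. To avoid that detour I will cite the version stated as a quotient of a subspace.
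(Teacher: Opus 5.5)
Your proposal is correct and matches the paper, which gives no separate proof and simply invokes Milman's quotient-of-a-subspace theorem with proportion $1/2$; identifying the quotient unit ball with $\pi(B_E)$ and rescaling the Banach--Mazur isomorphism $T$ is exactly the routine translation into the stated inclusions that the paper leaves implicit. Your third remark does not need duality: for $F\subseteq E\subseteq X$, the quotient norm on $E/F$ coincides with the norm $E/F$ inherits as a subspace of $X/F$, and every subspace of $X/F$ arises this way (take $E$ to be its preimage), so the quotient-of-a-subspace and subspace-of-a-quotient formulations are literally the same.
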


The first inclusion allows each Euclidean movement to be lifted
to \(E\) at no greater cost, while the second bounds the cost
of projecting a trajectory from \(E\) into \(\R^k\). With this result in place, we can now prove the general lower bound.

\medskip

\begin{proof}[Proof of Theorem~\ref{thm:main}.]
Fix a normed vector space \(X=(\R^d,\norm{\cdot}_X)\) with \(d\ge6\),
and choose the subspace $E$, quotient dimension $k$, and linear map $Q$ as in \cref{lem:milman}.
Set \(r=\lfloor(k-1)/2\rfloor\) and \(D=2r+1\).

We will apply the $p = 2$ case of \cref{prop:body}, appending zero coordinates
to its requests \(K_t\) as needed to view them as subsets of \(\R^k\).
Writing \(B=\sqrt D\,\Lambda\), the proposition tells us that every realized instance
has Euclidean offline optimum $\opt_2 \leq B$, while every deterministic online algorithm has Euclidean cost \(\alg_2\) satisfying
\begin{equation} \label{eq:euclidean_lb}
 \E[\alg_2]\ge\frac{c_0r^2\Lambda}{8\sqrt D}.
\end{equation}
For each Euclidean request \(K_t\), we will define a transformed request $\widetilde K_t$ in \(X\) by taking the intersection of $K_t$'s preimage under $Q$ with the $X$-ball of radius $B$:
\[
  \widetilde K_t
  =\{\mathbf{x}\in E:
      \norm{\mathbf{x}}_X\le B,\ Q\mathbf{x}\in K_t\}.
\]
Each $\widetilde K_t$ is clearly closed and convex; the bound on \(\norm{\mathbf{x}}_X\) also makes these requests compact, so the lower bound holds even under a restriction to compact requests. To see that these requests preserve the fixed upper bound $B$ on the offline optimal cost, fix a realized sequence of requests and a feasible Euclidean trajectory \(\mathbf{u}_0=\mathbf{0},\mathbf{u}_t\in K_t\) with cost bound $\sum_{t=1}^T \|\mathbf{u}_t - \mathbf{u}_{t-1}\|_2 \leq B$. Because \(B_2^k\subseteq Q(B_E)\), we can choose \(\mathbf{v}_t\in \norm{\mathbf{u}_t - \mathbf{u}_{t-1}}_2 \cdot B_E\) such that
\[
  Q\mathbf{v}_t=\mathbf{u}_t-\mathbf{u}_{t-1}
\]
and
$$
  \norm{\mathbf{v}_t}_X
    \le\norm{\mathbf{u}_t-\mathbf{u}_{t-1}}_2.$$
Set \(\mathbf{x}_0=\mathbf{0}\) and define
\(\mathbf{x}_t=\sum_{j=1}^t\mathbf{v}_j\). Then
\(Q\mathbf{x}_t=\mathbf{u}_t\), and we have both
\[
  \norm{\mathbf{x}_t}_X
  \le\sum_{j=1}^t\norm{\mathbf{v}_j}_X
  \le\sum_{j=1}^t
       \norm{\mathbf{u}_j-\mathbf{u}_{j-1}}_2
  \le B,
\]
and
$$\sum_{t=1}^T \norm{\mathbf{x}_t - \mathbf{x}_{t-1}}_X = \sum_{t=1}^T \norm{\mathbf{v}_t}_X \leq \sum_{t=1}^T \norm{\mathbf{u}_t - \mathbf{u}_{t-1}}_2 \leq B.$$
Thus each \(\mathbf{x}_t\in\widetilde K_t\) and the total cost (in $\|\cdot\|_X$) of following this sequence is at most $B$, so we have the identical optimal cost bound \(\opt_X\le B\).

Now fix an online algorithm \(\alg_X\) for the transformed requests. We can turn this into a Euclidean online algorithm for the original sequence $K_1, K_2, \ldots$ by transforming each arriving \(K_t\) into the corresponding \(\widetilde K_t\),
running \(\alg_X\) to obtain a decision $\mathbf{x}_t \in E$, and returning \(Q\mathbf{x}_t\) as the Euclidean decision for that round. By construction of the transformed request sequence, $Q\mathbf{x}_t \in K_t$, and the resulting algorithm is online because the transformation uses only $K_t$ and the fixed objects $E, Q$, and $B$.

Since \(Q(B_E)\subseteq M \cdot B_2^k\), mapping the trajectory $\mathbf{x}_t$ from $E$ to $\R^k$ (under $Q$) will increase its movement cost by at most the constant factor $M$. Thus, if \(\alg_X\) has competitive ratio \(R\) for instances in $X$, \eqref{eq:euclidean_lb} gives
\[
  \frac{c_0r^2\Lambda}{8\sqrt D}
  \le\E[\alg_2]
  \le M \cdot \E[\alg_X]
  \le MR \cdot \,\E[\opt_X]
  \le MR\sqrt D\,\Lambda.
\]
Rearranging gives
\begin{equation} \nonumber
  R\ge\frac{c_0r^2}{8MD}=\Omega(d),
  \label{eq:final-ratio}
\end{equation}
since \(r\ge k/4\), \(D\le k\), and \(k\ge d/2\).
In particular, the constant is universal and independent of the norm.
For \(2\le d\le5\), the claim follows from $R \geq 1$ and reducing the constant if needed.
\end{proof}

\section{Acknowledgments} \label{sec:ack}

This work was supported in part by a grant of access to OpenAI models through the ChatGPT for Academic Researchers program.

\paragraph{AI use statement.}
The lower bound in this paper was obtained via an extended interaction with OpenAI's GPT-6 Pro model; in particular, the use of the \cite{bubeck_randomized_2022} construction and its mapping to convex requests was the suggestion of this model. The author had previously attempted to obtain such a lower bound using various AI models (including GPT-6 Pro and previous versions of GPT Pro) without success. GPT-6 Pro also produced the first draft of this manuscript, which the author has substantially revised to improve clarity and accessibility. The author has verified the correctness and originality of all content, and takes full responsibility for the manuscript and its results.

\bibliographystyle{alpha}
\bibliography{references}

\end{document}